\newtheorem{theorem}{Theorem}
\newtheorem{lemma}{Lemma}
\newtheorem{corollary}{Corollary}
\newtheorem{remark}{Remark}
\newtheorem{definition}{Definition}
 \def\bmu{{\pmb{\mu}}}
\def\b0{{\pmb{0}}} 
 \def\bb{{\mathbf{b}}} \def\bc{{\mathbf{c}}} 
 \def\bff{{\mathbf{f}}}
\def\bq{{\mathbf{q}}}   
 \def\bv{{\mathbf{v}}}
\def\bA{{\mathbf{A}}}  \def\bC{{\mathbf{C}}} 
 \def\bF{{\mathbf{F}}}  
\def\bI{{\mathbf{I}}}
\begin{document}
\title{Concatenated Coding Using Linear Schemes for Gaussian Broadcast Channels with Noisy Channel Output Feedback}
\author{Ziad Ahmad, ~\IEEEmembership{Student Member,~IEEE},
Zachary~Chance,~\IEEEmembership{Member,~IEEE}, and 
David~J.~Love,~\IEEEmembership{Senior Member,~IEEE}
\thanks{Z. Ahmad and D. J. Love are with the School of Electrical and Computer Engineering, Purdue University, West Lafayette, IN, USA. Z. Chance is with the MIT Lincoln Laboratory, Lexington, MA, USA.
}

}

%\markboth{IEEE Transactions On Information Theory, Vol. XX, No. Y, Month~2013}
%{Ahmad \emph{et al.}:Concatenated Coding for the Gaussian Broadcast Channels with Noisy Feedback}  
\maketitle

\begin{abstract}
Linear coding schemes  have been the main choice of coding for the additive white Gaussian noise broadcast channel (AWGN-BC) with 
noiseless feedback in the literature. The achievable rate regions of these schemes
go well beyond the capacity region of the AWGN-BC without feedback. In this paper, a concatenating
coding design for the $K$-user AWGN-BC with noisy feedback is proposed that relies on linear feedback schemes to achieve rate tuples
outside the no-feedback capacity region. Specifically, a linear feedback code for the AWGN-BC with noisy feedback is used as an inner code that creates an effective  single-user channel from the transmitter to each of the receivers, and then open-loop coding is used for coding over these single-user channels. An achievable rate region of linear feedback schemes for noiseless feedback is shown to be achievable by the concatenated coding scheme for sufficiently small feedback noise level. Then, a linear feedback coding scheme for the $K$-user symmetric AWGN-BC
with noisy feedback is presented and optimized for use in the concatenated coding scheme. Lastly, we apply the concatenated coding design to the two-user AWGN-BC with a single noisy feedback link from one of the receivers.
\end{abstract}

\begin{IEEEkeywords}
Broadcast channel, noisy feedback, linear feedback, concatenated coding, network information theory. 
\end{IEEEkeywords}

\section{Introduction}
The demand for higher data rates in wireless communication systems continues to increase. However, there is concern that many of the popular approaches to physical layer design are only capable of minimal further enhancements \cite{dead}. In this paper, we look into one area that has not been fully explored which is the use of feedback in channel coding for increasing data rates.

The use of feedback in Gaussian channels dates back to the seminal paper by Schakwijk and Kailath (S-K)  \cite{Schal1}. Assuming a noiseless feedback link available from the receiver to the transmitter, the paper presented a simple linear scheme that achieves the capacity of the single-user additive white Gaussian noise (AWGN) channel. More importantly, the scheme has a probabilty of error that decays doubly exponentially with the blocklength as compared to at most linearly exponential decay for the same channel but without feedback \cite{shannon2}. The scheme was then extended by Ozarow \cite{Oz} to the AWGN broadcast channel (AWGN-BC), which is the focus of this paper, to show an improvement on the no-feedback capacity region using noiseless feedback. Also assuming noiseless feedback, the works in \cite{Kramer,El04,lqg} showed further improvements.

The only obstacle standing in the way of allowing these schemes to make it through to practical systems is the strong assumption of noiseless feedback. All of the beforementioned feedback coding schemes developed for the AWGN-BC with noiseless feedback are linear. For the point-to-point AWGN channel with feedback, it was shown in \cite{Kim2,ZaDa11}, that if the feedback noise level is larger than zero, no matter how low the level is, linear feedback schemes fail to achieve any positive rate. As we show in this paper, this negative result extends to the AWGN-BC.

Two recent works \cite{ramji,shayevitz2} presented achievable rate regions for the broadcast channel with general feedback. Both these regions where derived using schemes inspired by the example in \cite{Dueck}. In \cite{shayevitz2}, it is shown for two types of discrete memoryless channels that noisy feedback, specifically with sufficiently small feedback noise level, improves on the no-feedback capacity region. In \cite{ramji}, the achievable rate region is evaluated for the symmetric two-user AWGN-BC with a single feedback link from one of the receivers. In the high forward channel signal-to-noise ratio (SNR) regime, the scheme improves  on the no-feedback sum-capacity for a feedback noise level as high as the forward noise level. However, for low SNR (but still within practical values), the scheme's improvement over the no-feedback sum-capacity is negligible even for noiseless feedback.

In this paper, we consider the AWGN-BC with feedback. In particular, noiseless feedback will mean the transmitter has perfect access to the channel outputs in a causal fashion. On the other hand, noisy feedback will mean the transmitter has causal access to the channel outputs corrupted by AWGN in the feedback link from each receiver. We extend the concatenated coding scheme that was presented in \cite{ZaDa11} for the point-to-point AWGN with noisy feedback to the $K$-user AWGN-BC with noisy feedback.  Specifically, a linear feedback code for the AWGN-BC with noisy feedback is used as an inner code that creates an effective single-user channel from the transmitter to each of the receivers, and then open-loop (i.e., without feedback) coding is used for coding over these single-user channels. 

For the single-user case, the scheme in \cite{ZaDa11} showed improvements in error-exponents compared to the no-feedback case. For the AWGN-BC with noisy feedback, we use the extended concatenated coding scheme to show improvements on the no-feedback capacity region. The contributions and improvements on previous works will be stated towards the end of this section. Before that, we would like to comment on the practicality of the concatenated coding scheme presented in this paper. In fact, the concatenated coding scheme presented in this paper has the following attractive properties for practical systems:
\begin{itemize}
\item Feedback information is utilized using simple linear processing.
\item Open-loop coding is only used over single-user channels. Furthermore, when interference from the message points of other users is canceled out by the linear feedback code (as in the scheme of Section \ref{sect:linear-symmetric}), the effective single-user channels are pure AWGN channels for which open-loop codes are well developed in practice.
\item No broadcast channel coding techniques, like dirty paper coding or superposition coding, are required.
\end{itemize}

The results of Theorem \ref{thm:main}, Theorem \ref{thm:symm}, and Theorem \ref{thm:single-feedback} are for sufficiently small feedback noise levels (compared to forward noise levels). However, many broadcast communication systems can have small noise level over the feedback channels. This is especially true for systems where the receivers have a larger power available at their disposable than the transmitter. One example of such a system is found in satellite communications. In a satellite communcation system, the transmitter which is at the satellite would be broadcasting (possibly independent) data streams to different gateways present on earth. Satellites have much less power available than the gateways on earth. Another important application that possesses the same distribution of power is communication with implantable chips. In such an application, the chip implanted in the body of a human would like to broadcast different measurements to different devices that are located outside the body. Since the implantable chip powers itself from energy harvesting systems that convert ambient enegry to electrical energy, the transmitter would have a very small power available as compared to the receivers that are located outside the body. Therefore, assuming a low feedback noise level as compared to the foward noise level still captures many important applications that starve for improvement in rates or lower transmitter power consumption.  

The contributions of the paper can be summarized by the following:
\begin{itemize}
\item We show that if the feedback noise level for a receiver is strictly larger than zero, no matter how low the level is, linear feedback schemes can only achieve the zero rate to that receiver. This is an extension of the result derived in \cite{Kim2,ZaDa11} for the single-user case.
\item We extend the concatenated coding scheme presented in \cite{ZaDa11} to the $K$-user AWGN-BC with noisy feedback, and show an achievable rate region of linear feedback schemes to be achievable by the concatenated coding scheme for a sufficiently small feedback noise level. From this result, it is deduced that any achievable rate tuple by Ozarow's scheme \cite{Oz} for noiseless feedback can be achieved by the concatenated coding scheme for small enough feedback noise level.
\item We present a linear feedback scheme for the symmetric $K$-user AWGN-BC channel with noisy feedback that is optimized and used as an inner code in the concatenated coding scheme. For noiseless feedback, it is shown that the scheme achieves the same sum-rate as in \cite{lqg} but over the real channel, unlike the scheme presented in \cite{lqg} that requires a complex channel. We show that the latter sum-rate is also achievable for sufficienlty small feedback noise level. We also present achievable sum-rates versus feedback noise level otained using the same linear scheme in the design of the concatenated coding scheme.
\item We apply the concatenated coding idea to the two-user AWGN-BC with a single noisy feedback channel from one of the receivers. The scheme in \cite{bhaskaran} is used, with some modifications, as the inner code to show that any rate tuple that is achievable by the scheme in \cite{bhaskaran} for noiseless feedback can be achieved by concatenated coding for sufficiently small feedback noise level. This shows achievable rate tuples outside what is presented in \cite{ramji}, especially for low forward channel SNR.
\end{itemize}

The paper is organized as follows: In Section \ref{sect:linear-general}, we describe the channel setup and give a general framework for linear feedback coding. In Section \ref{sect:concatenated-general}, we present the concatenated coding scheme and its achievable rate region. In Section \ref{sect:linear-symmetric}, we present a linear feedback coding scheme for the symmetric AWGN-BC with noisy feedback that is utilized in the concatenated coding scheme in Section \ref{sect:concatenated-symmetric} for the same channel. In Section \ref{sect:single-feedback}, we present a concatenated coding design for the two-user AWGN-BC with one noisy feedback link from one of the receivers. The paper is concluded in Section \ref{sect:conclusion}.

\section{General Framework for Linear Feedback Coding} 
\label{sect:linear-general}
In this section, we formulate a general framework for linear feedback coding schemes for the $K$-user AWGN-BC with noisy feedback. 

\subsection{Channel Setup}
\label{sect:channel-setup}
We start by describing the channel setup that is depicted in Fig.~\ref{fig:channel}. The channel at hand has one transmitter and $K$ receivers. Before every block of transmission, the transmitter will have $K$ independent messages $W_1$, $W_2$, $\dots$, $W_K$, each to be conveyed reliably to the respective receiver.

After channel use $\ell$, the channel output at receiver $k$, for $k\in \mathbb K = \{1,2,\dots,K\}$, is given by
\begin{equation}
y_k[\ell] = x[\ell] + z_k[\ell],
\end{equation}
where $x[\ell] \in \mathbb R$ is the transmitted symbol at time $l$ and $\{z_k[\ell]\}$ are i.i.d. and such that $z_k[l] \sim \mathcal {N} (0,\sigma^2_{z_k})$. $z_k[\ell]$ is assumed independent of $x[\ell]$ for $k \in \mathbb K$. An average transmit power constraint, $P$, is imposed so that
\begin{equation}
\label{eq:pwrconstchannel}
E\left[\sum_{\ell=1}^{L}x^2[\ell] \right] \leq LP,
\end{equation}
where $L$ is the length of the transmission block.

 \begin{figure}[ht]
\centering
 \includegraphics{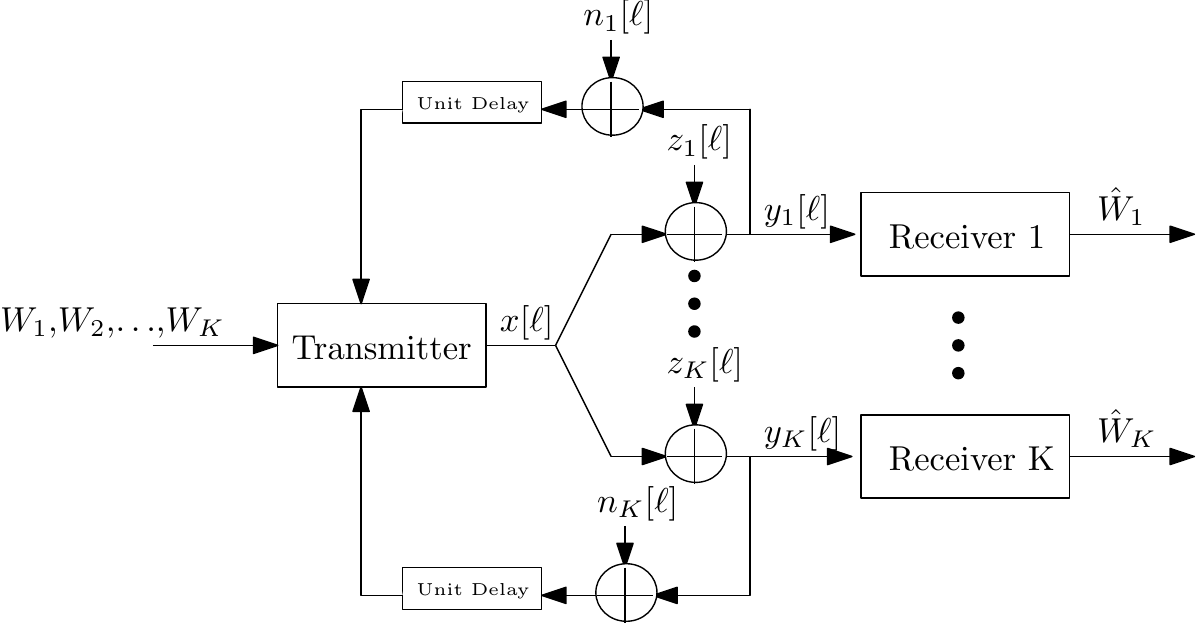}
\caption{AWGN-BC with feedback.}
\label{fig:channel}
\end{figure}

Through the presence of feedback links from each receiver to the transmitter, the transmitter will have access to noisy versions of the channel outputs of all receivers in a causal fashion. In particular, to form $x[\ell]$, the transmitter can use $\{y_1[1]+n_1[1],\dots,y_K[1]+n_K[1],\dots,y_1[\ell-1]+n_1[\ell-1],\dots,y_K[\ell-1]+n_K[\ell-1]\}$, where $\{n_k[\ell]\}$ are i.i.d. and such that $n_k[\ell] \sim \mathcal {N} (0,\sigma^2_{n_k})$. Since the transmitter knows what it had transmitted in the previous transmissions, it can subtract it and equivalently use $\{z_1[1]+n_1[1],\dots,z_K[1]+n_K[1],\dots,z_1[\ell-1]+n_1[\ell-1],\dots,z_K[\ell-1]+n_K[\ell-1]\}$. It is assumed that $n_k[\ell]$ is independent of a $x[\ell]$ for $k\in \mathbb K$, and $n_i[t]$ is independent of $z_j[s]$ for any $t,s \in \mathbb N$ and $i,j \in \mathbb K$. 

At the end of the transmission block, receiver $k$ will have an estimate of its message denoted by $\hat W_k$ for $k \in \mathbb K$.

\subsection{Linear Feedback Coding Framework}
\label{sect:linear-framework}
A general linear coding framework for the channel setup just described is presented next. Before each block of transmission, the transmitter maps each of the $K$ messages to a point in $\mathbb R$, which is termed a \emph{message point}. Specifically the message for the $k$-th receiver is mapped to $\theta_k \in \Theta_k \subseteq \mathbb R$ such that $|\Theta_k| = \lceil{2^{LR_k}}\rceil$, where $L$ is the length of the transmission block and $R_k$ is the rate of transmission for receiver $k$. 

Let $\mathbf x = [x[1], x[2],\dots,x[L]]^T$, $\mathbf z_k = [z_k[1], z_k[2],\dots,z_k[L]]^T$, $\mathbf n_k = [n_k[1], n_k[2],\dots,n_k[L]]^T$, and $\mathbf y_k = [y_k[1], y_k[2],\dots,y_k[L]]^T$, where the superscript $T$ denotes matrix transposition. Then we can write
\begin{equation*}
\mathbf x = \sum\limits_{k=1}^K\left[\mathbf g_k\theta_k+\mathbf F_k(\mathbf z_k + \mathbf n_k) \right],
\end{equation*}
where $\mathbf g_k \in \mathbb R^{L\times1}$ and  $\mathbf F_k \in \mathbb R^{L\times L}$ such that $\{\mathbf F_k\}$ are lower triangular matrices with zeros on the main diagonal so that casuality is ensured. 

The average transmit power constraint \eqref{eq:pwrconstchannel} can be written as
\begin{equation}
\label{eq:pwrconst}
E[\mathbf x^T\mathbf x] = \sum_{k=1}^K\mathbf g_k^T \mathbf g_k E[\theta_k^2]+ \sum\limits_{k=1}^K(\sigma_{z_k}^2+\sigma_{n_k}^2) \|\mathbf F_k\|_F^2 \leq LP.
\end{equation}
The received sequence at the $k$-th receiver can be written as
 \begin{equation*}
\mathbf  y_k = \mathbf x  + \mathbf z_k.
\end{equation*}

Each receiver will form an estimate of its message as a linear combination of its observed channel output sequence. Specifically, receiver $k$ will form an estimate $\hat\theta_k$ of $\theta_k$ as
\begin{equation*}
\hat\theta_k  = \mathbf q_k^T \mathbf y_k,
\end{equation*}
where $\mathbf q_k \in \mathbb R^{L\times1}$.

Breaking down $\hat\theta_k$ we have
\begin{equation}
\label{eq:thetahat}
\hat\theta_k = \mathbf q_k^T\mathbf g_k\theta_k + \sum\limits_{\substack{i=1\\i\neq k}}^K\mathbf q_k^T\mathbf g_i\theta_i + \sum\limits_{j=1}^K \mathbf q_k^T \mathbf F_j(\mathbf z_j + \mathbf n_j) +  \mathbf q_k^T\mathbf z_k.
\end{equation} 

\subsection{An Achievable Rate Region For Linear Feedback Coding} 
From \eqref{eq:thetahat}, any rate tuple $(R_1,\dots,R_K)$ that satisfies
\begin{equation}
\label{eq:userrate}
R_k<\lim_{L\rightarrow \infty}\frac{1}{2L}\log\left(1+SNR_k(L)\right),
\end{equation}
for all $k\in\mathbb K$ is achievable, where $SNR_k(L)$ is given in~\eqref{eq:usersnr}, and  in \eqref{eq:usersnr}, $\mathbf I$ is the identity matrix. 

\begin{equation}
\label{eq:usersnr}
SNR_k(L) = \frac{(\mathbf q_k^T\mathbf g_k)^2 E[\theta_k^2]}{\sum\limits_{\substack{i=1\\i\neq k}}^K\mathbf q_k^T \mathbf g_i E[\theta_i^2] + \sum\limits_{\substack{j=1\\j\neq k}}^K(\sigma_{z_j}^2+\sigma_{n_j}^2)\|\mathbf q_k^T\mathbf F_j\|^2 + \sigma_{z_k}^2\|\mathbf q_k^T(\mathbf I+ \mathbf F_k) \|^2 + \sigma_{n_k}^2\|\mathbf q_k^T\mathbf F_k \|^2}.
\end{equation}

Before closing this section, we show that for any linear feedback scheme, if the feedback noise variance of receiver $k$ is strictly greater than zero, i.e., if $\sigma_{n_k}^2>0$, then the only achievable rate for receiver $k$ is zero. This result is a direct extension of that of the single-user case shown in \cite{Kim2},\cite{ZaDa11}.
\begin{lemma}
\label{lemma:negative-result}
For any linear feedback scheme for the AWGN-BC with noisy feedback, if the feedback noise of receiver $k$ is strictly larger than zero, i.e., $\sigma_{n_k}^2>0$, then the only achievable rate $R_k$ for receiver $k$ is zero.
\end{lemma}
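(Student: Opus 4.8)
The plan is to show that $\sigma_{n_k}^2 > 0$ forces $SNR_k(L)$ to be bounded by a constant independent of $L$, which by \eqref{eq:userrate} forces $R_k = 0$. The intuition, following \cite{Kim2,ZaDa11}, is that the feedback noise $\mathbf n_k$ enters the transmitted signal $\mathbf x$ through $\mathbf F_k$ with the \emph{same} filter that the useful feedback content $\mathbf z_k$ does, so the receiver cannot separate them, and any attempt to exploit the feedback of link $k$ simultaneously injects an equal amount of independent noise. First I would observe that the numerator of $SNR_k(L)$ is $(\mathbf q_k^T\mathbf g_k)^2 E[\theta_k^2]$, while the denominator contains the two terms $\sigma_{z_k}^2\|\mathbf q_k^T(\mathbf I+\mathbf F_k)\|^2 + \sigma_{n_k}^2\|\mathbf q_k^T\mathbf F_k\|^2$ coming from receiver $k$'s own forward and feedback noise; I would drop all other (nonnegative) denominator terms to get an upper bound on $SNR_k(L)$ in terms of these quantities only.

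The key step is a power/geometry argument bounding the numerator. Write $\mathbf x = \mathbf g_k\theta_k + \mathbf F_k\mathbf z_k + \mathbf F_k\mathbf n_k + \mathbf r$, where $\mathbf r$ collects the contributions of the other users' message points and feedback terms, all independent of $\theta_k$, $\mathbf z_k$, $\mathbf n_k$. Since these three groups are mutually independent and zero-mean (the message point can be taken zero-mean WLOG), the power constraint \eqref{eq:pwrconst} gives in particular $\|\mathbf g_k\|^2 E[\theta_k^2] \le LP$ and $\sigma_{n_k}^2\|\mathbf F_k\|_F^2 \le LP$. Now I would apply Cauchy–Schwarz to the numerator: $(\mathbf q_k^T\mathbf g_k)^2 E[\theta_k^2] \le \|\mathbf q_k\|^2 \,\|\mathbf g_k\|^2 E[\theta_k^2] \le \|\mathbf q_k\|^2 LP$. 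So it remains to show the denominator terms from receiver $k$'s own noise are at least on the order of $\|\mathbf q_k\|^2$ divided by something bounded; more precisely, I want a lower bound of the form $\sigma_{z_k}^2\|\mathbf q_k^T(\mathbf I+\mathbf F_k)\|^2 + \sigma_{n_k}^2\|\mathbf q_k^T\mathbf F_k\|^2 \ge c\,\|\mathbf q_k\|^2$ for a constant $c>0$ depending only on $\sigma_{z_k}^2, \sigma_{n_k}^2$. This follows from the elementary inequality $\|\mathbf q_k^T(\mathbf I+\mathbf F_k)\|^2 + \|\mathbf q_k^T\mathbf F_k\|^2 \ge \tfrac12\|\mathbf q_k^T\|^2 = \tfrac12\|\mathbf q_k\|^2$ (since $(\mathbf I + \mathbf F_k) - \mathbf F_k = \mathbf I$ and $\|\mathbf a\|^2 + \|\mathbf b\|^2 \ge \tfrac12\|\mathbf a - \mathbf b\|^2$), so taking $c = \tfrac12\min(\sigma_{z_k}^2,\sigma_{n_k}^2)$ works. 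Combining, $SNR_k(L) \le \|\mathbf q_k\|^2 LP / (c\,\|\mathbf q_k\|^2) = LP/c$, so $\tfrac{1}{2L}\log(1+SNR_k(L)) \le \tfrac{1}{2L}\log(1 + LP/c) \to 0$.

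The main obstacle I anticipate is not this clean version but the bookkeeping needed if one does not want to assume the message point is zero-mean or if the paper's $SNR_k$ expression in \eqref{eq:usersnr} must be manipulated exactly as written (note its middle numerator term $\mathbf q_k^T\mathbf g_i E[\theta_i^2]$ looks dimensionally odd, suggesting the intended quantity is $(\mathbf q_k^T\mathbf g_i)^2 E[\theta_i^2]$, which is what I would use). One must be careful that when the useful term $\|\mathbf q_k^T\mathbf g_k\|$ is large, the receiver is forced — through the causal structure of $\mathbf F_k$ and the power constraint — to also have a large $\|\mathbf q_k\|$ relative to the signal it can actually extract; the inequality above makes this precise without needing the lower-triangular structure of $\mathbf F_k$ at all, only the algebraic identity $(\mathbf I + \mathbf F_k) - \mathbf F_k = \mathbf I$. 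I would present the argument in the order: (1) reduce to receiver $k$'s own terms, (2) extract $\|\mathbf g_k\|^2 E[\theta_k^2] \le LP$ from the power constraint, (3) Cauchy–Schwarz on the numerator, (4) the $\tfrac12$-identity lower bound on the denominator, (5) conclude $SNR_k(L) = O(L)$ hence $R_k = 0$.
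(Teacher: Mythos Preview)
Your argument is correct and follows the same high-level route as the paper: both proofs drop the other users' interference and other-receiver noise terms from the denominator of $SNR_k(L)$, then show that the remaining ratio
\[
\frac{(\mathbf q_k^T\mathbf g_k)^2 E[\theta_k^2]}{\sigma_{z_k}^2\|\mathbf q_k^T(\mathbf I+\mathbf F_k)\|^2 + \sigma_{n_k}^2\|\mathbf q_k^T\mathbf F_k\|^2}
\]
is $O(L)$ under the power constraint, from which $R_k=0$ follows.

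The difference lies in how the $O(L)$ bound is obtained. The paper does not argue it directly: it maximizes the ratio over all $(\mathbf q,\mathbf F,\mathbf g)$ subject to the single-user power constraint and invokes \cite[Lemma~3]{ZaDa11} to get the sharper bound $\frac{\sigma_{z_k}^2+\sigma_{n_k}^2}{\sigma_{n_k}^2}LP$. You instead keep the given $(\mathbf q_k,\mathbf F_k,\mathbf g_k)$, apply Cauchy--Schwarz to the numerator and the elementary inequality $\|\mathbf a\|^2+\|\mathbf b\|^2\ge\tfrac12\|\mathbf a-\mathbf b\|^2$ with $\mathbf a=\mathbf q_k^T(\mathbf I+\mathbf F_k)$, $\mathbf b=\mathbf q_k^T\mathbf F_k$ to the denominator, yielding $SNR_k(L)\le 2LP/\min(\sigma_{z_k}^2,\sigma_{n_k}^2)$. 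Your constant is looser, but that is irrelevant for the zero-rate conclusion; in exchange your proof is entirely self-contained, does not need the external lemma, and --- as you note --- never uses the strictly lower-triangular (causal) structure of $\mathbf F_k$. Your observation about the apparent typo in \eqref{eq:usersnr} (the interference term should be $(\mathbf q_k^T\mathbf g_i)^2 E[\theta_i^2]$) is also correct, though neither proof actually uses that term.
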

\begin{IEEEproof}
The result can be shown by direct extension of the single user result of \cite[Lemma 4]{ZaDa11}. 
We proceed by finding an upper bound on the achievable rates to receiver $k$ and show that it is equal to zero. First, removing the second term of \eqref{eq:thetahat}, we have 

\begin{equation}
\label{eq:thetahat2}
\hat\theta_k = \mathbf q_k^T\mathbf g_k\theta_k + \sum\limits_{j=1}^K \mathbf q_k^T \mathbf F_j(\mathbf z_j + \mathbf n_j) +  \mathbf q_k^T\mathbf z_k.
\end{equation}

Since the sum of the second and third terms in the right-hand side of \eqref{eq:thetahat2} is a Gaussian term, then any achievable rate $R_k$ to receiver $k$ must satisfy
\begin{equation*}
R_k \leq \lim_{L\rightarrow \infty}\frac{1}{2L}\log\left(1+\overline{SNR}_k(L)\right),
\end{equation*}
where $\overline{SNR}_k$ is the same as $SNR_k$ of \eqref{eq:usersnr} but with the term $\sum_{i\neq k}\mathbf q_k^T \mathbf g_i E[\theta_i^2]$
removed from the denominator. 

Now,
\begin{align*}
\overline{SNR}_k & \leq \frac{(\mathbf q_k^T\mathbf g_k)^2 E[\theta_k^2]}{\sigma_{z_k}^2\|\mathbf q_k^T(\mathbf I+ \mathbf F_k) \|^2 + \sigma_{n_k}^2\|\mathbf q_k^T\mathbf F_k \|^2}\\
%& \leq \max_{\substack{\mathbf q,\mathbf F,\mathbf g:\\ \mathbf g^T \mathbf g E[\theta_k^2]+ \sigma_{z_k}^2+\sigma_{n_k}^2) \|\mathbf F\|_F^2 \leq N\rho}}\frac{(\mathbf q^T\mathbf g)^2 E[\theta_k^2]}{\sigma_{z_k}^2\|\mathbf q^T(\mathbf I+ \mathbf F) \|^2 + \sigma_{n_k}^2\|\mathbf q^T\mathbf F \|^2}.\\
& \leq \max \frac{(\mathbf q^T\mathbf g)^2 E[\theta_k^2]}{\sigma_{z_k}^2\|\mathbf q^T(\mathbf I+ \mathbf F) \|^2 + \sigma_{n_k}^2\|\mathbf q^T\mathbf F \|^2}\\
&\leq \frac{\sigma_{z_k}^2+\sigma_{n_k}^2}{\sigma_{n_k}^2}LP,
\end{align*}
where the maximization is over $\mathbf q,\mathbf F$, and $\mathbf g$ under the constraint  $\mathbf g^T \mathbf g E[\theta_k^2]+ (\sigma_{z_k}^2+\sigma_{n_k}^2) \|\mathbf F\|_F^2 \leq LP$, and the last inequality is by \cite[Lemma 3]{ZaDa11}. Then, if a rate $R_k$ is achievable to receiver $k$, it has to satisfy
\begin{align*}
R_k & \leq \lim_{L\rightarrow \infty}\frac{1}{2L}\log\left(1+\overline{SNR}_k(L)\right) \\
& \leq \lim_{L\rightarrow \infty}\frac{1}{2L}\log\left(1+\frac{\sigma_{z_k}^2+\sigma_{n_k}^2}{\sigma_{n_k}^2}LP\right) \\
& = 0.
\end{align*}
\end{IEEEproof}

\section{Concatenated Coding Scheme}
\label{sect:concatenated-general}
From Lemma \ref{lemma:negative-result}, we see that linear processing alone can only achieve the zero rate to the receiver with noisy feedback. Therefore, we need to do more than linear processing for noisy feedback in order to achieve positive rates, and possibly achieve rate tuples that are outside the no-feedback capacity region. We describe such a scheme in this section and that uses open-loop coding on top of linear processing to achieve rate tuples outside the no-feedback capacity region.
  
For any linear feedback code, we observe from \eqref{eq:thetahat} that for receiver $k$, the stochastic relation between $\theta_k$ and $\hat \theta_k$ can be modeled as a single-user channel without feedback, as in Fig.~\ref{fig:superchannel}. This channel will be termed $k$-th user \emph{superchannel}. Since we can perform open-loop coding over the superchannel for each user, we have converted the problem to single-user coding without feedback. This will be the main idea behind the concatenated coding scheme to be described in this section. We call the scheme a concatenated coding scheme because of the use of open-loop codes in concatenation with a linear feedback code that creates the superchannels, which shares many similarities to the definition in \cite{Forney} but here for a multi-user channel. Note that the time index $m$ in Fig.~\ref{fig:superchannel} is shown to indicate that the superchannel will be used more than once for open-loop coding. The time index $m$ will be defined later as we describe open-loop coding over the superchannels.

\begin{figure}[h]
\centering
 \includegraphics{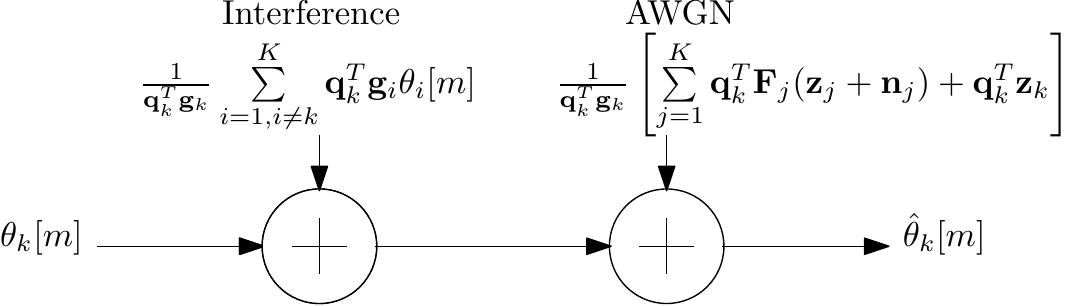}
\caption{Superchannel model.}
\label{fig:superchannel}
\end{figure}

Fig.~\ref{fig:concatenated} shows the overall concatenated coding scheme that will be described next. In each block of transmission, $K$ independent messages, $W_1$, $W_2$, $\dots$, $W_K$, will be available at the transmitter that are to be reliably coveyed, each to the respective receiver. The transmitter will use an open-loop code to encode each of the messages (i.e., will use $K$ open-loop encoders). All open-loop encoders use codebooks of equal blocklength $M$. Let the chosen codeword of the $k$-th open-loop encoder be $[\theta_k[1],\theta_k[2],\dots,\theta_k[M]]$. Similar to \cite{Forney} but for the AWGN-BC, we will term the block consisting of the $K$ open-loop encoders, which takes the $K$ messages as input and gives as an output $K$ coderwords each of length $M$, the \emph{outer code encoder}. At each time $m \in \{1,2,\dots,M\}$, the outer code encoder will have as an output, $\theta_1[m],\theta_2[m],\dots,\theta_K[m]$. 

 \begin{figure}[ht]
\centering
 \includegraphics{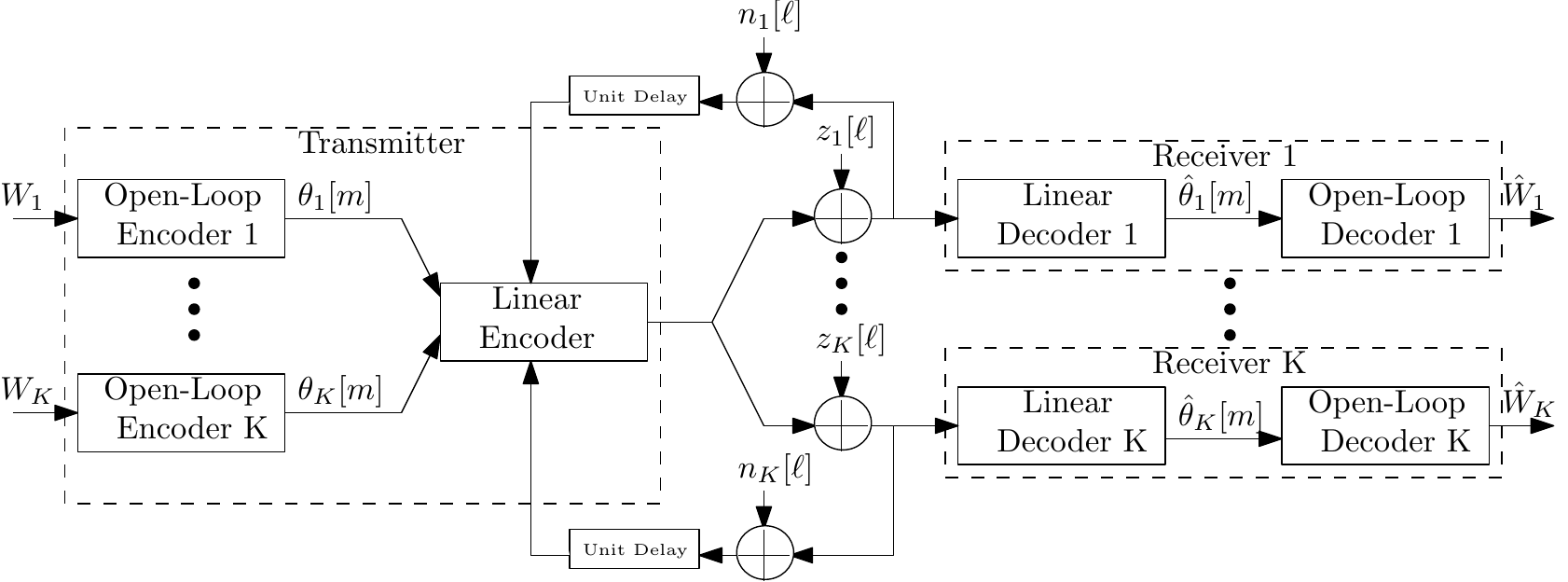}
\caption{Concatenated coding scheme.}
\label{fig:concatenated}
\end{figure}

For each set of $\theta_1[m],\theta_2[m],\dots,\theta_K[m]$, the transmitter will use a linear feedback code that will use the AWGN-BC with feedback $L$ times to have each receiver estimate its corresponding open-loop encoder output symbol, specifically, to have receiver $k$ estimate $\theta_k[m]$. The linear feedback code will be termed the \emph{inner code}. Its encoder will be termed the \emph{inner code encoder}, and its decoder at receiver $k$ will be termed the \emph{k-th inner code decoder}.  The $k$-th inner code decoder will output a linear estimate of $\theta_k[m]$. Let the estimate of $\theta_k[m]$, which is to be formed at receiver $k$, be $\hat\theta_k[m]$. 

Receiver $k$ will use an open-loop decoder, termed the \emph{k-th outer code decoder}, that corresponds to its open-loop encoder, to decode its message by observing the sequence $\hat\theta_k[1],\hat\theta_k[2],\dots,\hat\theta_k[M]$. 

The overall code for the AWGN-BC with feedback is of blocklength $ML$. Since for each $m$, the inner code encoder transmits with at most $LP$ of power, then the overall code uses a transmit power of at most $MLP$, and hence satisfies the codeword average power constraint. At receiver $k$, the SNRs is the same for all $\hat\theta_k[1],\hat\theta_k[2],\dots,\hat\theta_k[M]$, and is given by \eqref{eq:usersnr} if the time index is dropped (i.e., if $\theta_k[m]$ is simply written as $\theta_k$ for all $m$). Thus, if a linear code is fixed with blocklength $L$, the concatenated coding scheme described above can be designed to achieve any rate tuple $(R_1,R_2,\dots,R_k)$ that satisfies 
\begin{equation}
\label{eq:concatenatedrate}
R_k<\frac{1}{2L}\log\left(1+SNR_k(L)\right)
\end{equation}
for all $k\in \mathbb K$.

\begin{theorem}
\label{thm:main}
Given a linear feedback scheme over an AWGN-BC with noiseless feedback, for any rate tuple $(R_1,R_2,\dots,R_K)$ that satisfies \eqref{eq:userrate} for $k\in \mathbb K$, there exist $\epsilon_1>0,\dots,\epsilon_K>0$ such that the same rate tuple $(R_1,R_2,\dots,R_K)$ can be achieved by the concatenated coding scheme (scheme of Fig.~\ref{fig:concatenated}) over the same AWGN-BC but with $\sigma_{n_k}^2$ as large as $\epsilon_k$ for $k\in\mathbb K$.
\end{theorem}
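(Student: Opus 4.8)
The plan is to reduce the statement to the fact established just before the theorem --- that once a linear feedback inner code of \emph{fixed, finite} blocklength $L$ is in place, the concatenated scheme achieves every rate tuple satisfying \eqref{eq:concatenatedrate}. So it suffices to produce, for all feedback-noise variances below suitable thresholds $\epsilon_k$, one finite blocklength $L_0$ together with one admissible linear feedback code over the \emph{noisy}-feedback channel whose superchannel SNRs obey $\tfrac{1}{2L_0}\log(1+SNR_k(L_0))>R_k$ for every $k\in\mathbb K$.

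\emph{Reducing to a finite blocklength.} By hypothesis $R_k<\lim_{L\to\infty}\tfrac{1}{2L}\log(1+SNR_k(L))$ for every $k$, where $SNR_k(L)$ is \eqref{eq:usersnr} with $\sigma_{n_j}^2=0$ for all $j$ for the given noiseless scheme. Since $\mathbb K$ is finite, convergence of the $K$ sequences gives a common finite $L_0$ and a margin $\delta>0$ with $\tfrac{1}{2L_0}\log(1+SNR_k(L_0))>R_k+\delta$ for all $k$; fix the corresponding $\mathbf g_k,\mathbf F_k,E[\theta_k^2]$ and $\mathbf q_k$, which meet $\sum_k\mathbf g_k^T\mathbf g_k E[\theta_k^2]+\sum_k\sigma_{z_k}^2\|\mathbf F_k\|_F^2\le L_0P$. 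If every $\mathbf F_k=\mathbf 0$ the inner code uses no feedback, so neither $SNR_k(L)$ in \eqref{eq:usersnr} nor the power constraint \eqref{eq:pwrconst} involves any $\sigma_{n_k}^2$ and the claim is immediate; hence assume $S:=\sum_k\sigma_{z_k}^2\|\mathbf F_k\|_F^2>0$.

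\emph{Rescaling and perturbing.} Introduce a shrink factor $\gamma\in[0,1)$ acting on the feedback matrices only, $\mathbf F_j\mapsto(1-\gamma)\mathbf F_j$ (preserving their strictly lower triangular, hence causal, structure), and write $\epsilon_j=\sigma_{n_j}^2\ge0$. Running this code over the noisy-feedback channel, the left side of \eqref{eq:pwrconst} becomes $\sum_k\mathbf g_k^T\mathbf g_k E[\theta_k^2]+(1-\gamma)^2\sum_k(\sigma_{z_k}^2+\epsilon_k)\|\mathbf F_k\|_F^2$, which stays $\le L_0P$ as soon as $(1-\gamma)^2\big(S+\sum_k\epsilon_k\|\mathbf F_k\|_F^2\big)\le S$; for any fixed $\gamma>0$ this holds once the $\epsilon_k$ are small enough, and being increasing in each $\epsilon_k$ it then also holds for all smaller values. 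Meanwhile $SNR_k(L_0)$ from \eqref{eq:usersnr} is a ratio whose numerator is free of $(\gamma,\epsilon_1,\dots,\epsilon_K)$ and whose denominator is a polynomial in these variables equal at $\mathbf 0$ to the (positive, finite) denominator of the noiseless $SNR_k(L_0)$; hence $SNR_k(L_0)$ is continuous at $\mathbf 0$ with its noiseless value as limit.

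\emph{Conclusion, and the delicate point.} First pick $\gamma>0$ small enough that, at $\epsilon=\mathbf 0$, $\tfrac{1}{2L_0}\log(1+SNR_k(L_0))>R_k+\delta/2$ for all $k$; then, by the same continuity and the monotone power condition, pick $\epsilon_1,\dots,\epsilon_K>0$ small enough that for every $k$ and every $\sigma_{n_k}^2\in[0,\epsilon_k]$ both the power constraint \eqref{eq:pwrconst} is met and $\tfrac{1}{2L_0}\log(1+SNR_k(L_0))>R_k$ --- a finite intersection of open conditions, so such thresholds exist. The perturbed code is then an admissible linear feedback inner code of blocklength $L_0$ for the noisy-feedback channel whose superchannel SNRs exceed the required values, so \eqref{eq:concatenatedrate} delivers the tuple $(R_1,\dots,R_K)$ via the concatenated scheme. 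The genuinely delicate point is the order of operations in the last step: feedback noise simultaneously consumes transmit power through the $\sum_k\sigma_{n_k}^2\|\mathbf F_k\|_F^2$ term in \eqref{eq:pwrconst} (so the noiseless code is literally inadmissible once $\sigma_{n_k}^2>0$) and inflates the additive noise of every superchannel, and the shrink factor $1-\gamma$ must be fixed \emph{before} sending $\epsilon\to\mathbf 0$ so that it buys just enough power headroom while costing only a controlled, vanishing amount of SNR. Everything else is continuity of explicit affine and rational expressions and a finite intersection of strict inequalities.
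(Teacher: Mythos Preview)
Your proof is correct and follows the same overall strategy as the paper --- freeze a finite blocklength $L_0$, perturb the noiseless linear code slightly to buy power headroom for the new $\sum_k\sigma_{n_k}^2\|\mathbf F_k\|_F^2$ term in \eqref{eq:pwrconst}, and then invoke continuity of $SNR_k(L_0)$ in the perturbation and in the feedback-noise variances. The difference is in \emph{where} the headroom is created: the paper shrinks the signal vectors $\mathbf g_k$ (reducing a nonzero coordinate by some $\epsilon'_k$), whereas you scale all feedback matrices $\mathbf F_j\mapsto(1-\gamma)\mathbf F_j$. Your device is arguably tidier --- a single scalar $\gamma$ handles all $K$ users simultaneously, the numerator of \eqref{eq:usersnr} is left untouched, and you avoid the paper's ``assume without loss of generality that $g_{k1}\neq0$'' step. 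The paper's device has the minor advantage that it does not alter the $\sigma_{z_k}^2\|\mathbf q_k^T(\mathbf I+\mathbf F_k)\|^2$ term at all, but since both arguments rest on continuity of a rational function at a point where its denominator is strictly positive, this makes no substantive difference. Your handling of the degenerate case $\sum_k\|\mathbf F_k\|_F^2=0$ (no feedback used) is a clean touch that the paper leaves implicit.
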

\begin{IEEEproof}
For the given linear feedback coding scheme the SNR at receiver $k$ for blocklength $L$ is given by $SNR_k(L)$ of \eqref{eq:usersnr}. In this proof, we will make the dependence of the SNR on the blocklength and the feedback noise variances explicit, e.g., for a linear feedback code with blocklength $L$ that works according to the given linear feedback coding scheme over AWGN-BC with feedback noise variance for receiver $k$ of $\sigma^2_{n_k}$ will be written as $SNR_k(L,\sigma^2_{n_1},\dots,\sigma^2_{n_K})$. Note, here the dependence on $\sigma^2_{n_1},\dots,\sigma^2_{n_K}$ is just for the explicit values, i.e., if $\mathbf g_1$,$\dots$,$\mathbf g_K$, $\mathbf F_1$,$\dots$,$\mathbf F_K$, or $\mathbf q_1$,$\dots$,$\mathbf q_K$ depend on $\sigma^2_{n_1},\dots,\sigma^2_{n_K}$, it is not captured by the arguments of $SNR_k$.

For the given rate tuple $(R_1,R_2,\dots,R_K)$, assume $R_k>0$ for $k\in \mathbb K$; for the case of $R_k=0$ for some $k$, the proof below works the same but with trivially achieving the zero rates. Then,
\begin{equation*}
R_k<\lim_{L\rightarrow \infty}\frac{1}{2L}\log\left(1+SNR_k(L,0,\dots,0)\right),
\end{equation*}
for $k\in\mathbb K$.
Hence, there exists $L_0$ such that
\begin{equation*}
R_k<\frac{1}{2L_0}\log\left(1+SNR_k(L_0,0,\dots,0)\right)
\end{equation*}
for all $k\in \mathbb K$. Let the matrices of the given linear scheme for blocklength $L_0$ be $\mathbf g_1,\dots,\mathbf g_K$, $\mathbf F_1,\dots,\mathbf F_K$, and $\mathbf q_1,\dots,\mathbf q_K$ with the power constraint
\begin{equation*}
\sum_{k=1}^K\mathbf g_k^T \mathbf g_k E[\theta_k^2]+ \sum\limits_{k=1}^K\sigma_{z_k}^2 \|\mathbf F_k\|_F^2 \leq L_0P.
\end{equation*}

Let $g_{k1}$ be the first entry of $\mathbf g_k$ for $k\in \mathbb K$. Since $R_k>0$, then at least one entry of $\mathbf g_k$ is non-zero. Assume without loss of generality that $g_{k1}$ is non-zero. Also, assume that $g_{k1}>0$ (the proof still works in a similar way if $g_{k1}$ is assumed negative). For $k\in \mathbb K$, let $\mathbf g'_k$ be such that
\begin{equation}
\mathbf g'_{k}=\mathbf g_{k}-\epsilon'_k \left[\begin{array}{c} 1 \\ 0 \\ \vdots \\ 0 \end{array}\right],
\end{equation} 
where $g_{k1} - \epsilon'_k>0$ and $\epsilon'_1>0$, $\epsilon'_2>0$, \dots, $\epsilon'_K>0$ are to be chosen next. 

Choose $\epsilon'_1>0$,$\epsilon'_2>0$,\dots,$\epsilon'_K>0$ such that
\begin{equation*}
R_k<\frac{1}{2L_0}\log\left(1+SNR_k'(L_0,0,\dots,0)\right)
\end{equation*}
for all $k\in \mathbb K$, where $SNR_k'$ is the same function as $SNR_k$ but that uses $\mathbf g_1',\dots,\mathbf g_K'$ in place of $\mathbf g_1$,$\dots$,$\mathbf g_K$. This is possible by the continuity of $SNR_k$ at $\mathbf g_1$, $\mathbf g_2$, $\dots$, $\mathbf g_K$.

Now, choose  $\epsilon''_1>0$, $\epsilon''_2>0$,\dots, $\epsilon''_K>0$ such that
\begin{equation*}
\epsilon''_k \leq \frac{(\mathbf g_k^T \mathbf g_k -\mathbf {g'_k}^{T} \mathbf g'_k)E[\theta_k^2]}{ \|\mathbf F_k\|_F^2}
\end{equation*}
for $k \in \mathbb K$.
Also, choose $\epsilon'''_1>0$, $\epsilon'''_2>0$, $\dots$, $\epsilon'''_K>0$ such that
\begin{equation*}
R_k<\frac{1}{2L_0}\log\left(1+SNR_k'(L_0,\epsilon_1''',\dots,\epsilon_K''')\right)
\end{equation*}
for $k \in \mathbb K$.

Let $\epsilon_k = \min\{ \epsilon''_k,\epsilon'''_k \}$ for $k \in \mathbb K$.
Then, we have 
\begin{equation*}
\sum_{k=1}^K\mathbf {g'_k}^{T} \mathbf g_k' E[\theta_k^2]+ \sum\limits_{k=1}^K(\sigma_{z_k}^2+\epsilon_k) \|\mathbf F_k\|_F^2 \leq L_0P,
\end{equation*}
and
\begin{equation*}
R_k<\frac{1}{2L_0}\log\left(1+SNR_k'(L_0,\epsilon_1,\dots,\epsilon_K)\right)
\end{equation*}
for $k \in \mathbb K$.
Hence, for the same foward AWGN-BC but with feedback noise variances $\epsilon_1>0,\dots,\epsilon_K>0$, we have found a linear feedback code of blocklength $L_0$ defined by the matrices $\mathbf g_1',\dots,\mathbf g_K',\mathbf F_1,\dots,\mathbf F_K$, and $\mathbf q_1,\dots,\mathbf q_K$, that satisfies the power constraint, and that attains SNR at receiver $k$ of $SNR_k'(L_0,\epsilon_1,\dots,\epsilon_K)$ that is such that
\begin{equation*}
R_k<\frac{1}{2L_0}\log\left(1+SNR_k'(L_0,\epsilon_1,\dots,\epsilon_K)\right).
\end{equation*}
Using this linear code as an inner code, and by \eqref{eq:concatenatedrate}, the concatenated coding scheme achieves the rate tuple $(R_1,\dots,R_K)$.
\end{IEEEproof}

\begin{remark}
\label{remark:complex}
The result of Theorem~\ref{thm:main} can be directly extended to the complex AWGN-BC with complex AWGN feedback channels.
\end{remark}

\begin{remark}
\label{remark:ozarow}
In \cite{Oz}, the scheme is linear, and in addition to that, the achievable rate region presented in \cite{Oz} is the same as the set of rate tuples that satisfy \eqref{eq:userrate} for $k\in \mathbb K$. Hence, the achievable rate region for noiseless feedback in \cite{Oz} can be achieved by the concatenated coding scheme of Fig.~\ref{fig:concatenated} for sufficiently small feedback noise level. In  \cite{Oz}, an auxiliary Gaussian random variable $w$ is added to the first two transmissions, and only minor steps are needed to accomodate that in the proof of Theorem \ref{thm:main}.
\end{remark}

\section{A Linear Coding Scheme For The Symmetric AWGN-BC with Feedback}
\label{sect:linear-symmetric}
For designing the inner code of the concatenated coding scheme presented in Section~\ref{sect:concatenated-general}, we would ultimately like to find a linear coding scheme that maximizes the SNR at all receivers. However, to make the problem more tractable, we focus our attention on the symmetric case and impose some constraints on the scheme. 

With these constraints, and using the same channel setup of Section~\ref{sect:linear-general}, we present a linear coding scheme for the symmetric $K$-user AWGN-BC with feedback. Symmetric here means that all forward noises are of equal variances and all feedback noises are of equal variances too. Denote by $\sigma^2_z$ the forward noise variance and by $\sigma^2_n$ the feedback noise variance. We will set $\sigma^2_z \mathrel{\mathop:}=1$ so that $\sigma^2_n$ will represent the ratio $\nicefrac{\sigma^2_n}{\sigma^2_z}$ and $P$ will represent the channel SNR $\nicefrac{P}{\sigma^2_z}$ . The scheme we will develop will rely on techniques similar to code division multiple access (CDMA) techniques for nulling cross user interference. In this section, the total blocklength will be $L=\tilde{L}+K-1$, where $\tilde{L} \in \mathbb N$. The reason behind introducing a new parameter $\tilde{L}$ will be clearer as we describe the scheme. We assume that $K$ is an integer power of $2$, specifically $K \in \{2,4,8,16,\dots\}$.

Similar to the general formulation of Section~\ref{sect:linear-general}, the transmitter will map each of the independent $K$ messages to a message point in $\mathbb R$. Specifically, the transmitter maps the message intended to receiver $k$ to a point $\theta_k \in \Theta_k$ where $\Theta_k \subseteq \mathbb R$ and is such that $|\Theta_k|=\lceil{2^{LR_k}}\rceil$, where $R_k$ is the rate for receiver $k$. Similar to Ozarow's scheme~\cite{Oz}, the first $K$ transmissions are used to send the message points in an orthogonal fashion. We will assume that time division is used for achieving that and let $x[k]=\theta_k$ for $k\in \mathbb K$ (note that the traditional CDMA could be used too). The remaining $L-K$ transmissions will be used for sending feedback information in a CDMA-like manner that shares similarties to the techniques used in \cite{Kramer}. 

Let $\mathbf{\tilde z}_k = [z_k[k], z_k[K+1],\dots,z_k[L]]^T$, $\mathbf{\tilde n}_k = [n_k[k], n_k[K+1],\dots,n_k[L]]^T$, and $\mathbf{\tilde y}_k = [y_k[k], y_k[K+1],\dots,y_k[L]]^T$. Thus, we could write
\begin{equation}
\mathbf{\tilde y}_k = \mathbf{e_1}\theta_k+\sum_{k=1}^{K}\mathbf{\tilde F}_k(\mathbf{\tilde z}_k+\mathbf{\tilde n}_k) + \mathbf{\tilde z}_k,
\end{equation}
where $\mathbf{\tilde F}_k \in \mathbb R^{\tilde{L}\times\tilde{L}}$ and $\mathbf e_1$ is the first column of the $\tilde L \times \tilde L$ identity matrix.

For $k=1,2\dots,K$, let $\mathbf c_k \in \mathbb R^{1\times K}$ be of entries in $\{-1,1\}$ and such that 
\begin{equation*}
\bc_{i}^T\bc_{j} =
\begin{cases}
K, & i = j,\\
0, & i \neq j.
\end{cases}
\end{equation*}

\begin{remark}
\label{remark:K}
The vectors $\mathbf c_1,\dots,\mathbf c_K$ can be chosen as the columns of a $K\times K$ Hadamard matrix. For this reason, we have constrained $K$ to be an integer power of 2. Note, however, that if the channel at hand was complex, this constraint on $K$ can be alleviated by using complex Hadamard matrices, and all sum-rates derived for the real channel can be similarly achieved per real dimension over the complex channel for any $K \geq 2$.
\end{remark}

We will restrict $\mathbf{\tilde F}_k$ to be such that
\begin{equation*}
\mathbf{\tilde F}_k = \mathbf C_k \mathbf F,
\end{equation*}
where $\mathbf C_k \in \mathbb R^{\tilde L\times \tilde L}$ is such that
\begin{equation}
\label{eq:Cmatrix}
\mathbf [\mathbf C_k]_{ij}=
\begin{cases}
c_k[i\bmod K], & i = j,\\
0, & i \neq j,
\end{cases}
\end{equation}
and $\mathbf F \in \mathbb R^{\tilde L \times \tilde L}$ is a lower triangular matrix with zeros on the main diagonal to ensure causality and whose consrtuction will be described later.

With $\mathbf x$ defined as in Section \ref{sect:linear-general}, the average transmit power is bounded by
\begin{equation}
 E[\mathbf{x}^T\mathbf{x}] \leq LP. \label{power_multi}
\end{equation}
\noindent The power budget (\ref{power_multi}) can be divided between two different quantities: the power dedicated to the messages and the power used for feedback encoding.  This can be seen by expanding out (\ref{power_multi}) as
\begin{equation}
E[\mathbf x^T \mathbf x] = \displaystyle\sum_{k = 1}^{K}E[\theta_{k}^2] + K(1 + \sigma_{n}^2)\|\bF\|_{F}^{2}.
\end{equation}
The first quantity on the right hand side, $\sum_{k = 1}^{K}E[\theta_{k}^2]$, can be seen as the power used for transmitting the messages while the second term, $K(1 + \sigma_{n}^2)\|\bF\|_{F}^{2}$, is interpreted as the power utilized for transmitting feedback information.  Due to this trade-off, a new parameter $\gamma \in [0,1]$ is introduced such that
\begin{equation}
\displaystyle\sum_{k = 1}^{K}E[\theta_{k}^2] = (1-\gamma)LP,
\end{equation}
\noindent and
\begin{equation}
\label{eq:feedback-power}
K(1 + \sigma_{n}^2)\|\bF\|_{F}^{2} \leq \gamma LP.
\end{equation}
\noindent Thus, $\gamma$ can be thought of as the normalized ratio of power spent on encoding feedback information. Since the channel is symmetric, we will assume that 
\begin{equation*}
E[\theta_k^2] = \frac{1}{K}(1-\gamma)(\tilde L+K-1)P
\end{equation*}
for all users. 

%\subsection{Received SNR}\label{sec_isit_2b}

The receiver creates its estimate, $\hat{\theta}_{k}$ as
\begin{equation}
\hat{\theta}_{k} = \bq^T\bC_{k}\mathbf{\tilde y_{k}},
\end{equation}
\noindent where $\bq \in \mathbb{R}^{\tilde L}$. 

\begin{equation}
SNR_{k}(\tilde L) = \frac{(q[1])^2\frac{1}{K}(1-\gamma)(\tilde L+K-1)\rho}{\|\bq^T(\bI + \bF)\|^2 + \sigma_{n}^2\|\bq^T\bF\|^2 + (1 + \sigma_{n}^2)\displaystyle\sum_{\substack{i=1 \\ i \neq k}}^{K}\|\bq^T\bC_{k}\bC_{i}\bF\|^2}.\label{SNR_multi}
\end{equation}

%\subsection{Achievable Sum Rates}
Then the received SNR for the $k$-th receiver is given by \eqref{SNR_multi}.

\begin{definition} A sum-rate $R$ is said to be achievable if there exists a rate tuple $(R_1,R_2,\dots,R_k)$ that is achievable and satisfying
\begin{equation}
R = \displaystyle\sum_{i = 1}^{K}R_{k}.
\end{equation}
\end{definition}

\noindent Hence, any sum-rate $R$ that satisfies
\begin{equation}
R < \lim_{L \rightarrow \infty}\displaystyle\sum_{i = 1}^{K}\frac{1}{2L}\log\left(1 + SNR_{k}(L)\right),\label{sumrate}
\end{equation}
\noindent is achievable where $SNR_{k}(N)$ is written to show the dependence of the received SNR on the blocklength.

\subsection{Interference Nulling}
We will constraint our scheme to satisfy
\begin{equation}
\label{eq:ortho-requirement}
\sum_{\substack{i=1\\i \neq k}}^K\|\bq^T\bC_{k}\bC_{i}\bF\|^2 = 0,
\end{equation}
so that cross user interference is nulled to zero.

In the following lemma, constraints on the transmission scheme are given to satisfy requirement \eqref{eq:ortho-requirement}.
\begin{lemma}\label{lemma_isit_1}
Let $\mathbf C_k$ be defined as in \eqref{eq:Cmatrix} for $k=1,2,\dots,K$. Then, the following forms of $\mathbf q$ and $\mathbf F$ satisfy \eqref{eq:ortho-requirement}\emph{:}
\begin{itemize}
    \item For a real number $\beta \in (0,1)$
    \[
    \bq = \left[1,\beta^2,\beta^4,\dots,\beta^{2(\tilde L-1)}\right]^T.
    \]
    \item Let 
\begin{equation}
\mathbf f = \left[1,\beta^{-2},\beta^{-4},\dots,\beta^{-2(\tilde K-1)}\right]^T.
\end{equation}
The $i^{th}$ column of the $\bF$ matrix is built by $\lfloor\frac{\tilde L-i}{K}\rfloor$ scaled copies of $\mathbf f$ below the main diagonal and the remaining entries are set to zero. The scaling coefficient for the $i^{th}$ column and the $j^{th}$ copy of $\mathbf f$ will be called $\mu_{i,j} \in \mathbb{R}$.
Specifically, the  $i^{th}$ column of the $\bF$ matrix is given by
\begin{equation}
[
%\begin{array}{ccccc}
\underbrace{0\text{\space}\dots \text{\space}0}_\textrm{$\scriptstyle{i}$} \text{\space} \mu_{i,1} \mathbf f^T \text{\space} \text{\space} \mu_{i,2} \mathbf f^T \text{\space} \dots \text{\space}  \mu_{i,\lfloor\frac{\tilde L-i}{K}\rfloor} \mathbf f^T \text{\space} \underbrace{0\text{\space} \dots \text{\space} 0}_\textrm{$\scriptstyle{\tilde L-i-K\lfloor\frac{\tilde L-i}{K}\rfloor}$} 
%\end {array}
 ]^T.
\end{equation}

\end{itemize}
\end{lemma}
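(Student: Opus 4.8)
The plan is to prove the stronger claim that $\bq^T\bC_k\bC_i\bF$ is the zero row vector for every $i\in\mathbb K$ with $i\neq k$; since the left-hand side of \eqref{eq:ortho-requirement} is a sum of squared norms of exactly these vectors, \eqref{eq:ortho-requirement} follows at once. The first step is to unwind the structure of the factors. Because $\bC_k$ and $\bC_i$ are diagonal by \eqref{eq:Cmatrix}, so is $\bC_k\bC_i$, with $(j,j)$ entry $c_k[j\bmod K]\,c_i[j\bmod K]$; in particular this entry is periodic in $j$ with period $K$. Hence $\bq^T\bC_k\bC_i$ is the row vector whose $j$th entry is $\beta^{2(j-1)}c_k[j\bmod K]\,c_i[j\bmod K]$, and the lemma reduces to showing that this vector is orthogonal to every column of $\bF$.

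Next I would fix a column index $\ell$ and exploit the prescribed shape of $\bF$: its $\ell$th column is supported on $J\mathrel{\mathop:}=\lfloor(\tilde L-\ell)/K\rfloor$ consecutive length-$K$ blocks sitting below the diagonal, the $J'$th block occupying rows $\ell+(J'-1)K+1,\dots,\ell+J'K$ and being equal to $\mu_{\ell,J'}\mathbf f$ (here $\mathbf f$ has length $K$). The inner product of $\bq^T\bC_k\bC_i$ with this column then splits as $\sum_{J'=1}^{J}\mu_{\ell,J'}$ times a per-block contribution. The key observation is the reciprocity of the geometric sequences in $\bq$ and $\mathbf f$: at row $\ell+(J'-1)K+r+1$ the entry of $\bq$ is $\beta^{2(\ell+(J'-1)K+r)}$ while the corresponding entry of $\mathbf f$ is $\beta^{-2r}$, so their product is $\beta^{2(\ell+(J'-1)K)}$, independent of the within-block offset $r\in\{0,\dots,K-1\}$. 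Factoring this constant out, the $J'$th block contributes
\[
\mu_{\ell,J'}\,\beta^{2(\ell+(J'-1)K)}\sum_{r=0}^{K-1}c_k[(\ell+r+1)\bmod K]\,c_i[(\ell+r+1)\bmod K],
\]
and since $r\mapsto(\ell+r+1)\bmod K$ runs over a complete residue system modulo $K$, the remaining sum equals $\sum_{m=1}^{K}c_k[m]c_i[m]$, which is $0$ because $i\neq k$. Every block contribution vanishes, so $\bq^T\bC_k\bC_i\bF=\mathbf 0$, and hence \eqref{eq:ortho-requirement} holds.

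The only genuine obstacle is bookkeeping, and it concentrates in two places. First, one must be sure the copies of $\mathbf f$ making up a column of $\bF$ are \emph{complete} length-$K$ blocks, so that each per-block sum sees a full period of $j\mapsto c_k[j]c_i[j]$ and the Hadamard orthogonality $\sum_m c_k[m]c_i[m]=0$ can be invoked; this is exactly why the number of copies is the floor $\lfloor(\tilde L-\ell)/K\rfloor$ and why the trailing entries of each column are zeroed out. Second, one must check that the powers of $\beta$ in $\bq$ and in $\mathbf f$ are reciprocal ($\beta^{2}$ against $\beta^{-2}$), which is what collapses the $\beta$-dependence within a block to a single scalar and lets the orthogonality carry the argument; the free scalars $\mu_{\ell,J'}$ then play no role beyond factoring out block by block, so the conclusion is insensitive to their values (and, incidentally, the hypothesis $\beta\in(0,1)$ is not needed for the nulling property itself). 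Once these two points are isolated, the remainder is a short index manipulation.
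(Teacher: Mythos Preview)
Your proof is correct and follows essentially the same approach as the paper's: both hinge on the reciprocity $q[j]\cdot f[r]$ collapsing to a constant within each length-$K$ block, reducing the inner product to a full-period sum $\sum_m c_k[m]c_i[m]=0$. Your write-up is in fact more careful than the paper's sketch, explicitly checking that each copy of $\mathbf f$ occupies a complete residue system modulo $K$ (which the paper's stated observation about $v_2[i]=1/v_1[i]$ glosses over) and noting that $\beta\in(0,1)$ is irrelevant to the nulling itself.
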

\begin{proof}
The form of $\bF$ stems from
the following observation:  For any $\bv_{1} \in \mathbb{R}^{\tilde L}$ and
$\bv_{2} \in \mathbb{R}^{\tilde L}$, to satisfy
\[
\bv_{1}^T\bC_{i}\bC_{j}\bv_{2} =
\begin{cases}
\bv_{1}^T\bv_{2},& i = j\\
0,& i \neq j\\
\end{cases}
\]
\noindent the vectors $\bv_{1}$ and $\bv_{2}$ can be constructed as $v_{2}[i] = \frac{1}{v_{1}[i]}$ for all $i = 1,2,\ldots,\tilde L$.
Using this fact and the condition that it must hold between $\bq$ and $K$ shifts of $\bff$, the lemma is constructed.  The further choice that $\beta \in (0,1)$ is
to keep the norm of $\bq$ bounded as $\tilde L \rightarrow \infty$. Note that $\mathbf F$ is all zeros for $\tilde L\leq K$.
\end{proof}

\subsection{SNR Optimization}

 With $\mathbf q$ and $\mathbf F$ having forms as in Lemma \ref{lemma_isit_1}, the SNR at any of the receivers can be written as
\begin{equation}
\label{eq:ortho-SNR}
SNR(\tilde L) = \frac{\frac{1}{K}(1-\gamma)(\tilde L+K-1)P}{\|\bq^T(\bI + \bF)\|^2 + \sigma_{n}^2\|\bq^T\bF\|^2}.
\end{equation} 
In the following lemma, given $\gamma$ and $\beta$, we optimize $SNR$ \eqref{eq:ortho-SNR} over the values of $\mu_{i,j}$.

\begin{lemma}\label{lemma_isit_2}
Assume $\tilde L>K$. Given $\gamma,\beta \in (0,1)$ and following the forms of $\mathbf q$ and $\mathbf F$ as in Lemma \ref{lemma_isit_1}, the $\mu_{i,j}$
values of  $\mathbf F$ that maximize the received SNR \eqref{eq:ortho-SNR} given the power constraint \eqref{power_multi} can be obtained as follows: \\
\begin{enumerate}
\item Define
\[
\bmu_{i} = \left[\mu_{i,1},\mu_{i,2},\ldots,\mu_{i,\left\lfloor {\frac{\tilde L-i}{K} }\right\rfloor}\right]^T,
\]
\vspace{-4mm}
\[
\bv_{i} = K\beta^{i-1}\left[1, \beta^K, \ldots, \beta^{K(\left\lfloor {\frac{\tilde L-i}{K} }\right\rfloor-1)}\right]^T,
\]
\noindent  for $i = 1,2,\ldots,\tilde L-K$.
\item Then, the $\bmu_{i}$ that maximize the received SNR are constructed as
\[
\bmu_{i} = -\frac{q_{i}}{(1+\sigma_{n}^2)\|\bv_{i}\|^2 + \lambda}\bv_{i},
\]
\noindent where $\lambda \geq 0$ is chosen to satisfy
\[
\displaystyle\sum_{i = 1}^{\tilde L-K}\|\bmu_{i}\|^2 \leq \frac{\gamma L P}{K(1 + \sigma_n^2)\|\bff\|^2}.
\]
\end{enumerate}
\end{lemma}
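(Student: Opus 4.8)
The plan is to reduce the constrained $SNR$ maximization to a finite-dimensional, column-decoupled convex program and then invoke the KKT conditions. Since the numerator of \eqref{eq:ortho-SNR} is a constant once $\gamma$, $\beta$, and $\tilde L$ are fixed, maximizing $SNR(\tilde L)$ subject to \eqref{power_multi} is equivalent to minimizing the denominator
\[
D(\bF) \mathrel{\mathop:}= \|\bq^T(\bI+\bF)\|^2 + \sigma_n^2\|\bq^T\bF\|^2
\]
over matrices $\bF$ of the form prescribed in Lemma~\ref{lemma_isit_1}, subject to the feedback-power budget \eqref{eq:feedback-power}. Writing $m_i \mathrel{\mathop:}= \lfloor(\tilde L-i)/K\rfloor$, the $i$-th column of such an $\bF$ is assembled from the scaled copies $\mu_{i,1}\bff,\dots,\mu_{i,m_i}\bff$, so $\|\bF\|_F^2 = \|\bff\|^2\sum_{i=1}^{\tilde L-K}\|\bmu_i\|^2$ and \eqref{eq:feedback-power} becomes $\sum_{i=1}^{\tilde L-K}\|\bmu_i\|^2 \le C \mathrel{\mathop:}= \gamma L P/(K(1+\sigma_n^2)\|\bff\|^2)$, which is exactly the budget stated in the lemma. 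The free variables are the vectors $\bmu_1,\dots,\bmu_{\tilde L-K}$, and they parametrize disjoint columns of $\bF$.

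Next I would evaluate $\bq^T\bF$ column by column. Because $\bq$ and $\bff$ are matched geometric sequences (the common ratio of $\bff$ is the reciprocal of that of $\bq$), the inner product of $\bq$ with the $j$-th copy of $\bff$ placed in column $i$ collapses: the $\beta$-powers cancel over each length-$K$ block, leaving $K$ times a single term. Summing over the $m_i$ copies then gives $(\bq^T\bF)_i = \langle\bmu_i,\bv_i\rangle$ with $\bv_i$ exactly the geometric vector defined in part~1 of the lemma. Since the trailing columns with $i>\tilde L-K$ vanish, this yields
\[
D(\bF) = \sum_{i=1}^{\tilde L-K}\Big[\big(q_i+\langle\bmu_i,\bv_i\rangle\big)^2 + \sigma_n^2\langle\bmu_i,\bv_i\rangle^2\Big] + \sum_{i=\tilde L-K+1}^{\tilde L}q_i^2,
\]
where $q_i$ denotes the $i$-th entry of $\bq$; in particular $D$ sees each $\bmu_i$ only through the scalar $t_i \mathrel{\mathop:}= \langle\bmu_i,\bv_i\rangle$, and the final sum is a constant.

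This decoupling is the crux. For a prescribed value $t_i$ of the projection, the power $\|\bmu_i\|^2$ is minimized, by Cauchy--Schwarz, exactly when $\bmu_i$ is colinear with $\bv_i$, i.e. $\bmu_i=(t_i/\|\bv_i\|^2)\bv_i$ and then $\|\bmu_i\|^2=t_i^2/\|\bv_i\|^2$. Hence it suffices to solve the scalar problem: minimize $\sum_{i=1}^{\tilde L-K}\big[(q_i+t_i)^2+\sigma_n^2 t_i^2\big]$ over $(t_1,\dots,t_{\tilde L-K})$ subject to $\sum_i t_i^2/\|\bv_i\|^2 \le C$. This is a convex quadratic program with one convex quadratic constraint, so the KKT conditions are necessary and sufficient; stationarity of the Lagrangian $\sum_i[(q_i+t_i)^2+\sigma_n^2 t_i^2]+\lambda\big(\sum_i t_i^2/\|\bv_i\|^2-C\big)$ gives $t_i = -q_i\|\bv_i\|^2/\big((1+\sigma_n^2)\|\bv_i\|^2+\lambda\big)$, equivalently
\[
\bmu_i = -\frac{q_i}{(1+\sigma_n^2)\|\bv_i\|^2+\lambda}\,\bv_i ,
\]
with $\lambda\ge0$: $\lambda=0$ when the unconstrained minimizer $t_i=-q_i/(1+\sigma_n^2)$ already satisfies the budget, and otherwise $\lambda>0$ determined by complementary slackness $\sum_i\|\bmu_i\|^2=C$. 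Uniqueness of this $\lambda$ follows since $\sum_i\|\bmu_i\|^2$ is continuous and strictly decreasing in $\lambda\ge0$.

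\textbf{Main obstacle.} I expect the real work to be entirely in the column-wise evaluation of $\bq^T\bF$: keeping the block indexing straight (where each copy of $\bff$ begins inside column $i$, how many copies fit, and which bottom columns vanish when $\tilde L$ only slightly exceeds $K$) and verifying that the matched geometric ratios make the per-block sums telescope so that $(\bq^T\bF)_i$ is the clean inner product $\langle\bmu_i,\bv_i\rangle$. Once that identity and the resulting decoupling through the scalars $t_i$ are established, the colinearity reduction and the KKT computation are routine. The hypothesis $\tilde L>K$ enters only to guarantee at least one free column and to ensure the discarded bottom columns contribute merely an additive constant to $D$.
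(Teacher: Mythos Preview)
Your proposal is correct and follows essentially the same route as the paper: rewrite the denominator column-by-column as $\sum_{i>\tilde L-K}q_i^2+\sum_{i\le\tilde L-K}\big[(q_i+\bv_i^T\bmu_i)^2+\sigma_n^2(\bv_i^T\bmu_i)^2\big]$, argue (the paper just asserts it; you invoke Cauchy--Schwarz) that each $\bmu_i$ may be taken colinear with $\bv_i$, and then solve the resulting one-constraint quadratic program with a Lagrange multiplier. The only cosmetic difference is that the paper packages the final step in matrix form, writing the objective as $\|\bA\bb-\bq\|^2+\sigma_n^2\|\bA\bb\|^2$ with $\bA=\diag(\|\bv_1\|,\dots,\|\bv_{\tilde L-K}\|)$ and reading off $\bb_{\min}=[(1+\sigma_n^2)\bA^T\bA+\lambda\bI]^{-1}\bA^T\bq$, whereas you keep the scalars $t_i$ and write the KKT stationarity directly; both yield the stated $\bmu_i$.
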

\begin{proof}
With the definitions in Lemma \ref{lemma_isit_2}, the denominator of the received SNR in \eqref{eq:ortho-SNR} can be rewritten as
\begin{equation}
\displaystyle\sum_{i = \tilde L-K+1}^{\tilde L}q_{i}^2+\displaystyle\sum_{i = 1}^{\tilde L-K}\left(q_{i} + \bv_{i}^T\bmu_{i}\right)^2 + \sigma_n^2\displaystyle\sum_{i = 1}^{\tilde L-K}\left(\bv_{i}^T\bmu_{i}\right)^2\label{minobj}.
\end{equation}
\noindent Then, it can be shown that to minimize (\ref{minobj}), one should let $\bmu_i = -b_{i}\frac{\bv_{i}}{\|\bv_{i}\|}$ for some scalars $b_{i}$ for $i = 1,2,\ldots,\tilde L-K$.
The sum of the second and third terms of \eqref{minobj} can now be rewritten as
\begin{equation}
\|\bA\bb - \bq\|^2 + \sigma_{n}^2\|\bA\bb\|^2,\label{obj_multi}
\end{equation}
\noindent where $\bA \in \mathbb{R}^{\tilde L \times \tilde L-K}$ is
\[
\bA = \left[\begin{array}{ccccc}
\|\bv_{1}\| & 0 & 0 & \cdots & 0\\
0 & \|\bv_{2}\| & 0 & \cdots & 0\\
\vdots & & \ddots & & \vdots\\
0 & 0 & \cdots & 0 & \|\bv_{\tilde L-K}\|\\
0 & 0 & \cdots & & 0\\
\vdots & &\vdots & &\vdots\\
0 & 0 & \cdots & & 0\\
\end{array}\right]
\]
\noindent and $\bb = [b_{1},b_{2},\ldots,b_{\tilde L-K}]^T$.  To minimize
(\ref{obj_multi}) and abide by the average power constraint, we use
Lagrange multipliers to obtain the $\bb$ that minimizes (\ref{obj_multi}) is
\begin{equation}
\bb_{min} = \left[(1 + \sigma_n^2)\bA^T\bA + \lambda\bI\right]^{-1}\bA^T\bq,
\end{equation}
\noindent where $\lambda$ is chosen to satisfy the power constraint.  Thus, using $\bb_{min}$ to build $\bmu_{i}$, we produce the lemma.
\end{proof}

The optimal form of $\bmu_i$ in Lemma \ref{lemma_isit_2} depends on $\lambda$ for which a closed form is generally hard to obtain. We will leave the optimal form for numerical optimization. However, notice that $\lambda \rightarrow 0$ as $L \rightarrow \infty$ in which case it can be shown that 
\begin{equation}
\mu_{i,j} = -\frac{1-\beta^{2K}}{(1+\sigma_n^2)K}\beta^{K(j-1)}.
\end{equation}

Furthermore, as $\sigma_n^2 \rightarrow 0$, we have 
\begin{equation}
\label{eq:mu}
\mu_{i,j} = -\frac{1-\beta^{2K}}{K}\beta^{K(j-1)}.
\end{equation} 
Using \eqref{eq:mu}, for $\tilde L >K$ the SNR at any of the receivers can be written as
\begin{equation}
\label{eq:snr-symm-scheme}
SNR(\tilde L) = \frac{\frac{1}{K}(1-\gamma)(\tilde L+K-1)P}{g(\tilde L,\beta)+\sigma_n^2h(\tilde L,\beta)},
\end{equation}
where
\begin{equation*}
g(\tilde L,\beta) = \sum^{\tilde L}_{i=\tilde N-K+1}\beta^{2(i-1)} + \sum^{\tilde L-K}_{i=1}\beta^{\left[2(i-1)+4K\left\lfloor{\frac{\tilde L-i}{K}}\right\rfloor\right]},
\end{equation*}
and
\begin{equation*}
h(\tilde L,\beta) = \sum_{i=1}^{\tilde L-K}\beta^{2(i-1)}\left(1-\beta^{2K\left\lfloor{\frac{\tilde L-i}{K}}\right\rfloor}\right)^2,
\end{equation*}
and the power constraint \eqref{eq:feedback-power} can be written as
\begin{equation}
\label{eq:powerconstraint}
e(\tilde L,\beta) \leq \frac{\gamma(\tilde L+K-1)P}{K(1+\sigma_n^2)},
\end{equation}
where 
%\begin{align*}
%e(N,&\beta) = \\
%&\frac{(1-\beta^{2K})^2}{K^2(1-\beta^2)\beta^{2K}}\left(N-K-\sum^{N-K}_{i=1}\beta^{2K\left\lfloor{\frac{N-i}{K}}\right\rfloor}\right).
%\end{align*}
\begin{equation*}
e(\tilde L,\beta) =\frac{(1-\beta^{2K})^2}{K^2(1-\beta^2)\beta^{2K}}\left(\tilde L-K-\sum^{\tilde L-K}_{i=1}\beta^{2K\left\lfloor{\frac{\tilde L-i}{K}}\right\rfloor}\right).
\end{equation*}
Then \eqref{sumrate} can be written as
\begin{equation}
\label{eq:sum-rate-scheme}
R<\lim_{\tilde L\rightarrow \infty} \frac{K}{2(\tilde L+K-1)}\log\left(1+SNR(\tilde L)\right).
\end{equation}
  
In the next lemma, we find upper and lower bounds on $SNR(\tilde L)$.
\begin{lemma}
\label{lemma:snrbound}
Assume $\sigma_n^2 = 0$. Then, $SNR(\tilde L)$ can be bounded as
\begin{equation*}
SNR_{lb}(\tilde L) \leq SNR(\tilde L) \leq SNR_{ub}(\tilde L),
\end{equation*}
where
\begin{equation*}
SNR_{lb}(\tilde L) = \frac{a_{lb}(1-\gamma)(\tilde L+K-1)\frac{P}{K}}{\beta^{2\tilde L}},
\end{equation*}
\begin{equation*}
SNR_{ub}(\tilde L) = \frac{(1-\beta^2)(1-\gamma)(\tilde L+K-1)\frac{P}{K}}{\beta^{2(\tilde L-K)}-\beta^{2\tilde L} + \beta^{2(\tilde L-K-1)}(1-\beta^{2(\tilde L-K)})},
\end{equation*}
and
\begin{equation*}
a_{lb} = \frac{(1-\beta^2)}{\beta^{-2K}(1+\beta^2)-1}.
\end{equation*}
Also, for large $\tilde L$
\begin{equation}
SNR(\tilde L) \approx SNR_{lb}(\tilde L) \approx SNR_{ub}(\tilde L).
\end{equation}
\end{lemma}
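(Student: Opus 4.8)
Since the lemma fixes $\sigma_n^2=0$, equation \eqref{eq:snr-symm-scheme} becomes $SNR(\tilde L)=\frac{(1-\gamma)(\tilde L+K-1)P}{K\,g(\tilde L,\beta)}$, and this numerator is, up to the scalar factors $a_{lb}$ and $1-\beta^2$, the numerator of $SNR_{lb}(\tilde L)$ and of $SNR_{ub}(\tilde L)$. So the plan is to prove the claimed chain of inequalities by sandwiching the denominator $g(\tilde L,\beta)$ between two explicit expressions and then inverting: dividing the common numerator through turns the two bounds on $g$ into the stated $SNR_{lb}(\tilde L)\le SNR(\tilde L)\le SNR_{ub}(\tilde L)$, and the closing ``$\approx$'' assertion will follow because both outer bounds on $g$ have the common order $\beta^{2(\tilde L-K)}$.

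To bound $g$, split it as $g=g_1+g_2$ along the two sums in its definition: $g_1=\sum_{i=\tilde L-K+1}^{\tilde L}\beta^{2(i-1)}$ is the contribution of the last $K$ coordinates, where the feedback matrix $\bF$ is inactive and $\bq^{T}(\bI+\bF)$ simply equals $\bq^{T}$, while $g_2=\sum_{i=1}^{\tilde L-K}\beta^{2(i-1)+4K\lfloor(\tilde L-i)/K\rfloor}$ accounts for the rest. The term $g_1$ is a $K$-term geometric series, so $g_1=\beta^{2(\tilde L-K)}\frac{1-\beta^{2K}}{1-\beta^2}=\frac{\beta^{2(\tilde L-K)}-\beta^{2\tilde L}}{1-\beta^2}$, which is already the leading part of both target denominators. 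For $g_2$ the idea is to group the index $i$ according to $m=\lfloor(\tilde L-i)/K\rfloor\ge 1$: for each $m$ the admissible $i$'s form a run of at most $K$ consecutive integers, and the corresponding terms are a geometric progression in $\beta^2$ with smallest exponent $2\tilde L+2K(m-1)$. Bounding each run above by a full $K$-term geometric series and then summing the resulting geometric series in $\beta^{2K}$ over $m\ge 1$ gives $g_2\le\frac{\beta^{2\tilde L}}{1-\beta^2}$, so $g_2=O(\beta^{2\tilde L})$ is a lower-order correction to $g_1$; keeping only the $m=1$ block gives a lower bound on $g_2$ of the matching order. Substituting these into $g=g_1+g_2$ yields $\frac{\beta^{2(\tilde L-K)}-\beta^{2\tilde L}}{1-\beta^2}\le g\le\frac{\beta^{2(\tilde L-K)}}{1-\beta^2}$, and an elementary check (essentially $a_{lb}\le\beta^{2K}(1-\beta^2)$, i.e.\ $\beta^{2K}\le\beta^2$, which holds since $K\ge 1$) rearranges these inequalities into the stated $SNR_{lb}(\tilde L)$ and $SNR_{ub}(\tilde L)$. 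Finally, since $g_1\le g\le\frac{\beta^{2(\tilde L-K)}}{1-\beta^2}$ with $g_1=\Theta(\beta^{2(\tilde L-K)})$, we get $g=\Theta(\beta^{2(\tilde L-K)})$, so $SNR(\tilde L)$, $SNR_{lb}(\tilde L)$ and $SNR_{ub}(\tilde L)$ are all of order $\tilde L\,\beta^{-2\tilde L}$; hence $\frac{1}{2(\tilde L+K-1)}\log(1+\cdot)$ of each converges to $-\log\beta$, which is the sense in which they are approximately equal for large $\tilde L$.

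The main obstacle is the bookkeeping forced by the floor $\lfloor(\tilde L-i)/K\rfloor$ inside $g_2$: one must treat the partial block that appears when $\tilde L$ is not a multiple of $K$, together with the small-$\tilde L$ regime $K<\tilde L<2K$ (and $\tilde L\le K$, where $\bF$ vanishes entirely), so that the per-block geometric bounds hold uniformly for every $\tilde L>K$ rather than only asymptotically. A related subtlety is that the crude estimate $\lfloor x\rfloor\ge x-1$ is too lossy here --- it overfills the exponent gaps between consecutive blocks and would inflate the bound on $g_2$ past $g_1$ --- which is why the argument has to be organized block-by-block; once that structure is extracted, the rest is routine manipulation of finite geometric series.
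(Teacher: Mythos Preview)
Your plan --- sandwich $g(\tilde L,\beta)$ between explicit expressions and invert --- is the paper's strategy too, but the execution differs, and your last paragraph misreads what is actually needed.

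The paper does \emph{not} group $g_2$ block-by-block. It uses precisely the ``crude'' floor estimate you call too lossy: applying $\lfloor(\tilde L-i)/K\rfloor\ge(\tilde L-i-K+1)/K$ termwise and summing the resulting geometric series gives $g_2\le\beta^{2(\tilde L-K+1)}/(1-\beta^2)$, whence $g\le\beta^{2\tilde L}/a_{lb}$ on the nose; the companion bound $\lfloor x\rfloor\le x$ is used for the other direction. No block structure is required, and the ``overfilling'' you worry about does not occur because the factor $4K$ in the exponent absorbs the $1/K$ slack in the floor bound. Your block-by-block argument $g_2\le\beta^{2\tilde L}/(1-\beta^2)$ happens to be \emph{tighter} than the paper's, and your check $a_{lb}\le\beta^{2K}(1-\beta^2)$ then yields $SNR\ge SNR_{lb}$, so that half of the plan is fine --- just more elaborate than necessary.

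The gap is on the $SNR_{ub}$ side. You assert that $g\ge g_1$ ``rearranges'' into $SNR\le SNR_{ub}$, but the direction is wrong: dividing the denominator of $SNR_{ub}$ by $1-\beta^2$ gives $g_1$ \emph{plus} the extra positive term $\beta^{2(\tilde L-K-1)}(1-\beta^{2(\tilde L-K)})/(1-\beta^2)$, so the target lower bound on $g$ is strictly larger than your $g_1$. From $g\ge g_1$ you therefore obtain only a \emph{weaker} upper bound on $SNR$ than the stated $SNR_{ub}$, not the stated one. Retaining your $m=1$ block does not close the gap either, since that block is of order $\beta^{2\tilde L}\ll\beta^{2(\tilde L-K-1)}$. (In fact a direct check at small $\tilde L$ --- e.g.\ $K=2$, $\tilde L=4$ --- shows $g$ is \emph{smaller} than the target, so the stated $SNR_{ub}$ is itself too sharp; the paper's derivation of that direction contains arithmetic slips.) None of this affects the asymptotic claim: your sandwich $g_1\le g\le\beta^{2(\tilde L-K)}/(1-\beta^2)$ already gives $g=\Theta(\beta^{2(\tilde L-K)})$ and hence $\frac{1}{2(\tilde L+K-1)}\log(1+SNR(\tilde L))\to-\log\beta$, which is all the lemma is used for downstream.
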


\begin{IEEEproof}
The second term of $g(\tilde L,\beta)$ can be upper bounded as
\begin{align*}
\sum^{\tilde L-K}_{i=1}\beta^{\left[2(i-1)+4K\left\lfloor{\frac{\tilde L-i}{K}}\right\rfloor\right]} & \leq \sum^{\tilde L-K}_{i=1}\beta^{\left[2(i-1) + 4(\tilde L-i-K+1)\right]} \\
& = \beta^{2(\tilde L-K+1)}\frac{1-\beta^{2(\tilde L-K)}}{1-\beta^2} \\
& \leq \beta^{2\tilde L}\frac{\beta^{-2(K-1)}}{1-\beta^2},
\end{align*}
where the first inequality is due to the fact that
\begin{equation}
 \left\lfloor{\frac{\tilde L-i}{K}}\right\rfloor \geq \frac{\tilde L-i-K+1}{K}.
\end{equation}
Using this bound, $SNR_{lb}$ can be reached.

On the other hand, the second term of $g(\tilde L,\beta)$ can be lower bounded as
\begin{align*}
\sum^{\tilde L-K}_{i=1}\beta^{\left[2(i-1)+4K\left\lfloor{\frac{\tilde L-i}{K}}\right\rfloor\right]} & \geq \sum^{\tilde L-K}_{i=1}\beta^{\left[2(i-1) + 4(\tilde L-i+K)\right]} \\
& = \beta^{2(\tilde L-K-1)}\sum^{\tilde L-K}_{i=1}\beta^{2(\tilde L -K -i)} \\
& = \beta^{2(\tilde L-K-1)}\frac{1-\beta^{2(\tilde L-K)}}{1-\beta^2}
\end{align*}
where the first inequality is due to the fact that
\begin{equation}
 \left\lfloor{\frac{\tilde L-i}{K}}\right\rfloor \leq \frac{\tilde L-i}{K}+1.
\end{equation}
Using this bound, $SNR_{ub}$ can be reached.

For large $\tilde L$, we can see that  $SNR_{lb}(\tilde L) \approx SNR_{ub}(\tilde L)$ and thus $SNR(\tilde L) \approx SNR_{lb}(\tilde L) \approx SNR_{ub}(\tilde L)$.
\end{IEEEproof}
%\begin{equation*}
%SNR_{ub}(N) = \frac{\frac{1}{K}(1-\gamma)(N+K-1)\rho}{\sigma_z^2\left(\beta^{2N}\frac{\beta^{-2K}+\beta^{2(K-1)}-\beta^{2(2K-1)}-1}{1-\beta^2}\right)+\sigma_n^2\left(1-\beta^{N-K}\right)}.
%\end{equation*}

\subsection{Achievable Sum-Rate For Noiseless Feedback}

For the noiseless feedback case (i.e., for $\sigma_n^2=0$), from Lemma \ref{lemma:snrbound}, we see that
\begin{equation*}
\lim_{\tilde L \rightarrow \infty} \frac{K}{2(\tilde L+K-1)} \log\left(1+SNR_{lb}(\tilde L)\right) =
	\lim_{\tilde L \rightarrow \infty}  \frac{K}{2(\tilde L+K-1)}  \log\left(1+SNR_{ub}(\tilde L)\right) = -K\log(\beta),
\end{equation*}

and hence 
\begin{equation*}
\lim_{\tilde L \rightarrow \infty} \frac{K}{2(\tilde L+K-1)} \log\left(1+SNR(\tilde L)\right) = -K\log(\beta).
\end{equation*}
Thus, any sum-rate $R$ is achievable if
\begin{equation}
\label{eq:sum-rate-beta}
R<  -K\log(\beta).
\end{equation}
In the following lemma, we show that $\beta$ and $\gamma$ can in fact be chosen so that the right-hand side of \eqref{eq:sum-rate-beta} is equal to the linear-feedback sum-rate bound derived in \cite{lqg}.

\begin{lemma}
\label{lemma:noiseless-sum-rate}
Let $\phi \in [1,K]$ be the solution of
\begin{equation}
\label{eq:phi-equation}
\left(1+P\phi\right)^{K-1} - \left[1+\frac{P}{K}\phi(K-\phi)\right]^K = 0. 
\end{equation}
The power constraint allows $\beta$ to be chosen as
\begin{equation*}
\beta^{-2K} = 1+P\phi
\end{equation*}
so that the scheme achieves any sum-rate $R$ satisfying
\begin{equation}
R<\frac{1}{2}\log\left(1+P\phi\right).
\end{equation}
\end{lemma}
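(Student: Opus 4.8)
The plan is to combine Lemma~\ref{lemma:snrbound} (together with the discussion following it) with a single algebraic identity tying the power constraint~\eqref{eq:powerconstraint} to the defining equation~\eqref{eq:phi-equation}. In the noiseless case, Lemma~\ref{lemma:snrbound} gives $\lim_{\tilde L\to\infty}\tfrac{K}{2(\tilde L+K-1)}\log\bigl(1+SNR(\tilde L)\bigr)=-K\log\beta$, so by~\eqref{eq:sum-rate-scheme} the scheme achieves every sum-rate $R<-K\log\beta$ as long as~\eqref{eq:powerconstraint} is respected (for a fixed $\gamma\in(0,1)$, or for $\gamma=\gamma(\tilde L)$ kept bounded away from the endpoints of $(0,1)$ in the sense made precise below). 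Since $-K\log\beta=\tfrac12\log\bigl(\beta^{-2K}\bigr)$, the choice $\beta^{-2K}=1+P\phi$ turns this bound into $R<\tfrac12\log(1+P\phi)$, and since $P>0$ and $\phi\in[1,K]$ give $1+P\phi>1$, the resulting $\beta$ lies in $(0,1)$, as Lemma~\ref{lemma_isit_1} requires. So everything reduces to showing that $\beta^{-2K}=1+P\phi$ is compatible with~\eqref{eq:powerconstraint} for some admissible $\gamma$.

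The crux is an algebraic computation. With $\beta^{2K}=(1+P\phi)^{-1}$ and $\beta^2=(1+P\phi)^{-1/K}$, one finds after simplification
\[
\frac{(1-\beta^{2K})^2}{K(1-\beta^2)\beta^{2K}}=\frac{P^2\phi^2}{K(1+P\phi)\bigl(1-(1+P\phi)^{-1/K}\bigr)},
\]
and this equals $P$ exactly when $K(1+P\phi)^{1-1/K}=K+P\phi(K-\phi)$, i.e.\ when $(1+P\phi)^{1-1/K}=1+\tfrac{P}{K}\phi(K-\phi)$; raising both sides to the $K$-th power recovers precisely~\eqref{eq:phi-equation}. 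Since the coefficient multiplying $\bigl(\tilde L-K-\sum_{i=1}^{\tilde L-K}\beta^{2K\lfloor(\tilde L-i)/K\rfloor}\bigr)$ in $e(\tilde L,\beta)$ is $\tfrac1K\cdot\tfrac{(1-\beta^{2K})^2}{K(1-\beta^2)\beta^{2K}}$, this means that for $\phi$ the solution of~\eqref{eq:phi-equation} we have $e(\tilde L,\beta)=\tfrac{P}{K}\bigl(\tilde L-K-\sum_{i=1}^{\tilde L-K}\beta^{2K\lfloor(\tilde L-i)/K\rfloor}\bigr)$.

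Finally I would feed this back into the power constraint and the rate limit. The sum appearing above is nonnegative (and in fact bounded uniformly in $\tilde L$, being at most $K\sum_{m\ge1}\beta^{2Km}$), so the constraint~\eqref{eq:powerconstraint} with $\sigma_n^2=0$, namely $e(\tilde L,\beta)\le\gamma(\tilde L+K-1)P/K$, is equivalent to $\gamma\ge\bigl(\tilde L-K-\sum_i\beta^{2K\lfloor(\tilde L-i)/K\rfloor}\bigr)/(\tilde L+K-1)$, whose right-hand side is strictly below $1$ for every finite $\tilde L$; so we take $\gamma=\gamma(\tilde L)$ equal to (or slightly above) this threshold. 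For that choice the message power $\tfrac1K(1-\gamma)(\tilde L+K-1)P=\tfrac{P}{K}\bigl(2K-1+\sum_i\beta^{2K\lfloor(\tilde L-i)/K\rfloor}\bigr)$ stays positive and bounded between two positive constants, so the $\beta^{2\tilde L}$–scaling of the denominator in~\eqref{eq:ortho-SNR} still dominates and $\tfrac{K}{2(\tilde L+K-1)}\log(1+SNR(\tilde L))\to-K\log\beta=\tfrac12\log(1+P\phi)$ exactly as in Lemma~\ref{lemma:snrbound}. Hence, for any $R<\tfrac12\log(1+P\phi)$, a sufficiently large $\tilde L$ with $\gamma$ as above makes $R$ achievable. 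The step I expect to be the real obstacle is carrying out the displayed simplification cleanly and matching it to~\eqref{eq:phi-equation}; the rest is bookkeeping. (One can also record, via the intermediate value theorem, that~\eqref{eq:phi-equation} does possess a root in $[1,K]$: its left-hand side is negative at $\phi=1$, since $K\log\bigl(1+\tfrac{(K-1)P}{K}\bigr)>(K-1)\log(1+P)$ for $P>0$, and positive at $\phi=K$, where it equals $(1+PK)^{K-1}-1$.)
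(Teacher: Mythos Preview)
Your proposal is correct and follows essentially the same route as the paper: both arguments reduce to the algebraic identity that $\tfrac{(1-\beta^{2K})^2}{K(1-\beta^2)\beta^{2K}}=P$ is equivalent (via $\beta^{-2K}=1+P\phi$) to~\eqref{eq:phi-equation}, and both then invoke the limit $-K\log\beta$ from Lemma~\ref{lemma:snrbound}. The only difference is cosmetic: the paper simply takes $\gamma=(L-1)/L$ and checks the limiting power constraint, whereas you set $\gamma(\tilde L)$ to the exact threshold and verify explicitly that the message power stays bounded away from zero; your added IVT remark on the existence of $\phi\in[1,K]$ is a nice extra the paper omits.
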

\begin{IEEEproof}
Choose $\gamma=\frac{L-1}{L}$.
We choose $\beta$ such that all available power is consumed. Specifically, we choose $\beta$ such that
\begin{equation*}
\lim_{\tilde L \rightarrow \infty} \frac{e(\tilde L,\beta)}{\gamma(\tilde L+K-1)} = \frac{P}{K}.
\end{equation*}
The left-hand side of the above equation is equal to $\frac{(1-\beta^{2K})^2}{K^2(1-\beta^2)\beta^{2K}}$. Let $\beta^{-2K} = 1+P\phi$ and solve for $\phi$ instead of $\beta$. The resulting equation in $\phi$ can be reduced to \eqref{eq:phi-equation}. By \eqref{eq:sum-rate-beta}, the proof is complete.
\end{IEEEproof}

The sum-rate achieved here is the same as in \cite{lqg}. However, in \cite{lqg} the scheme requires a complex channel in order to achieve, per real dimension, the same sum-rate of Lemma \ref{lemma:noiseless-sum-rate}. This is especially true for $K>2$. Note, however, that the number of users $K$ is constrained to be an integer power of 2 for the real channel case.

\section{Concatenated Coding for the Symmetric AWGN-BC with Noisy Feedback}

\label{sect:concatenated-symmetric}
In this section, we consider the same concatenated scheme that was described in Section \ref{sect:concatenated-general}, but that relies on the linear scheme of Section \ref{sect:linear-symmetric} for coding over the symmetric AWGN-BC with noisy feedback. From Section \ref{sect:concatenated-general} and by the symmetry of the channel and scheme, if we fix a linear code of blocklength $L$ that works according to the scheme described in Section \ref{sect:linear-symmetric}, then any sum rate, $R$, can be achieved by the concatenated scheme just described if
\begin{equation}
\label{eq:achievable-sum-rate-concatenated}
R<\frac{K}{2(\tilde L +K -1)}\log\left(1+SNR(\tilde L)\right),
\end{equation}
where $SNR(\tilde L)$ is defined by \eqref{eq:ortho-SNR}.

\subsection{Achievable Sum-Rates For Small Enough Feedback Noise Level}
In this section, we discuss the achievable sum-rates for small enough feedback noise variance. From Theorem \ref{thm:main}, we know that what is achieved for the noiseless feedback case in Lemma \ref{lemma:noiseless-sum-rate} can be achieved for small enough feedback noise level by the concatenated coding scheme. However, for sum-rates close to the bound in Lemma \ref{lemma:noiseless-sum-rate}, the required inner code blocklength will be larger, and together with small $\sigma_n^2$, makes the choice of $\mu_{i,j}$ in \eqref{eq:mu} approximately optimal. For such case, and given a value for $\gamma$, Lemma \ref{lemma:powerconstraint} and Lemma \ref{lemma:f} will be useful for choosing the value of $\beta$. We will also use those lemmas to rederive the result of Theorem \ref{thm:main} but using the specifics of the scheme of this section.

\begin{lemma}
\label{lemma:powerconstraint}
$\beta$ that satisfies
\begin{equation}
\label{eq:strick-power-constraint}
\frac{(1-\beta^{2K})^2}{K(1-\beta^2)\beta^{2K}} \leq \frac{\gamma P}{1+\sigma_n^2},
\end{equation}
satisfies the power constraint \eqref{eq:powerconstraint} for any $\tilde L$.
\end{lemma}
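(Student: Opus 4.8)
The plan is to show that \eqref{eq:strick-power-constraint} implies \eqref{eq:powerconstraint} by bounding the quantity $e(\tilde L,\beta)$ from above in a way that matches the left-hand side of \eqref{eq:strick-power-constraint}. Recall that
\[
e(\tilde L,\beta) = \frac{(1-\beta^{2K})^2}{K^2(1-\beta^2)\beta^{2K}}\left(\tilde L - K - \sum_{i=1}^{\tilde L-K}\beta^{2K\lfloor (\tilde L-i)/K\rfloor}\right),
\]
and that \eqref{eq:powerconstraint} asks for $e(\tilde L,\beta)\leq \gamma(\tilde L+K-1)P/(K(1+\sigma_n^2))$. The obvious first step is to drop the (nonnegative) sum $\sum_{i=1}^{\tilde L-K}\beta^{2K\lfloor (\tilde L-i)/K\rfloor}$, which gives $e(\tilde L,\beta)\leq \frac{(1-\beta^{2K})^2}{K^2(1-\beta^2)\beta^{2K}}(\tilde L-K)$.

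Next I would compare $\tilde L-K$ with the quantity $\tilde L+K-1$ appearing on the right-hand side of \eqref{eq:powerconstraint}. Since $\beta\in(0,1)$, $K\geq 2$, and $\tilde L\in\mathbb N$, we have $\tilde L - K < \tilde L + K - 1$, so
\[
e(\tilde L,\beta)\leq \frac{(1-\beta^{2K})^2}{K^2(1-\beta^2)\beta^{2K}}(\tilde L + K - 1) = \frac{1}{K}\cdot\frac{(1-\beta^{2K})^2}{K(1-\beta^2)\beta^{2K}}\cdot(\tilde L+K-1).
\]
Now invoke the hypothesis \eqref{eq:strick-power-constraint}, which says precisely $\frac{(1-\beta^{2K})^2}{K(1-\beta^2)\beta^{2K}}\leq \frac{\gamma P}{1+\sigma_n^2}$; substituting yields $e(\tilde L,\beta)\leq \frac{\gamma P(\tilde L+K-1)}{K(1+\sigma_n^2)}$, which is exactly \eqref{eq:powerconstraint}. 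Since none of the bounds used depended on $\tilde L$ beyond $\tilde L\in\mathbb N$ (and implicitly $\tilde L>K$ for $e$ to be the relevant expression, though the bound also holds trivially when $\tilde L\leq K$ since then $\bF=\mathbf 0$), the conclusion holds for all $\tilde L$.

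I do not expect any serious obstacle here; the only point requiring a little care is the direction of the inequality $\tilde L-K$ versus $\tilde L+K-1$ and making sure the discarded sum is genuinely nonnegative (it is, being a sum of powers of $\beta\in(0,1)$). A secondary nicety is handling the degenerate regime $\tilde L\leq K$, where $e(\tilde L,\beta)$ as written may involve an empty sum; in that case $\|\bF\|_F^2=0$ by Lemma~\ref{lemma_isit_1}, so \eqref{eq:feedback-power}, hence \eqref{eq:powerconstraint}, holds automatically and the claim "for any $\tilde L$" is vacuously fine.
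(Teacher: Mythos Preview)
Your argument is correct and matches the paper's proof essentially step for step: drop the nonnegative sum in $e(\tilde L,\beta)$, use $\tilde L-K\leq \tilde L+K-1$, and then apply the hypothesis \eqref{eq:strick-power-constraint}. The only addition you make is the explicit handling of the degenerate case $\tilde L\leq K$, which the paper leaves implicit.
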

\begin{proof}
$e(\tilde L,\beta)$ of \eqref{eq:powerconstraint} can be upper bounded as follows
\begin{equation*}
e(\tilde L,\beta) \leq \frac{(1-\beta^{2K})^2}{K^2(1-\beta^2)\beta^{2K}}(\tilde L+K-1).
\end{equation*}
Hence, $\beta$ that satisfies 
\begin{equation*}
\frac{(1-\beta^{2K})^2}{K^2(1-\beta^2)\beta^{2K}}(\tilde L+K-1) \leq \frac{\gamma(\tilde L+K-1)P}{K(1+\sigma_n^2)}
\end{equation*}
satisfies \eqref{eq:powerconstraint}.
\end{proof}

Note that for large $\tilde L$, the power lost by assuming the power constraint \eqref{eq:strick-power-constraint} instead of \eqref{eq:powerconstraint} becomes negligible.

\begin{lemma}
\label{lemma:f}
Let $f(\beta) = \frac{(1-\beta^{2K})^2}{K(1-\beta^2)\beta^{2K}}$. Then
\begin{itemize}
\item $f$ is a decreasing positive function on $(0,1)$. Specifically, if $\beta_1,\beta_2\in(0,1)$ are such that $\beta_1<\beta_2$, then $0<f(\beta_2)<f(\beta_1)$.
\item $f$ is a bijective function from $(0,1)$ to $(0,\infty)$.
%\item For any $c\in(0,\infty)$, there exists $\beta\in(0,1)$ such that $f(\beta)=c$. 
\end{itemize}
\end{lemma}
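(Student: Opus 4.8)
The plan is to analyze $f(\beta) = \frac{(1-\beta^{2K})^2}{K(1-\beta^2)\beta^{2K}}$ directly by a substitution that removes the $\beta^{2K}$ clutter. First I would set $u = \beta^{2K} \in (0,1)$, which is itself a strictly increasing bijection from $(0,1)$ to $(0,1)$, so monotonicity and bijectivity of $f$ as a function of $\beta$ are equivalent to the corresponding properties of the composed map as a function of $u$ once the factor $1-\beta^2$ is handled. The slight annoyance is the lone $1-\beta^2 = 1-u^{1/K}$ in the denominator; rather than fight it, I would keep $\beta$ as the variable and instead write $f(\beta) = \frac{1}{K}\cdot\frac{(1-\beta^{2K})^2}{(1-\beta^2)\beta^{2K}}$ and observe that $\frac{1-\beta^{2K}}{1-\beta^2} = \sum_{j=0}^{K-1}\beta^{2j}$, a positive polynomial in $\beta^2$. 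Hence
\[
f(\beta) = \frac{1}{K}\,\big(1-\beta^{2K}\big)\,\beta^{-2K}\sum_{j=0}^{K-1}\beta^{2j} = \frac{1}{K}\big(\beta^{-2K}-1\big)\sum_{j=0}^{K-1}\beta^{2j}.
\]
With $t = \beta^2 \in (0,1)$ this is $f = \tfrac1K (t^{-K}-1)\sum_{j=0}^{K-1} t^{j} = \tfrac1K\big(\sum_{j=0}^{K-1} t^{j-K} - \sum_{j=0}^{K-1} t^{j}\big) = \tfrac1K\big(\sum_{m=1}^{K} t^{-m} - \sum_{m=0}^{K-1} t^{m}\big)$.

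For the first bullet (strictly decreasing and positive on $(0,1)$): positivity is immediate since $\beta\in(0,1)$ gives $\beta^{2K}<1$ so $\beta^{-2K}-1>0$ and $\sum_{j=0}^{K-1}\beta^{2j}>0$. For strict monotonicity, note $t\mapsto t^{-m}$ is strictly decreasing on $(0,1)$ for each $m\ge 1$ and $t\mapsto -t^{m}$ is strictly decreasing on $(0,1)$ for each $m\ge 0$; hence the finite sum $\tfrac1K\big(\sum_{m=1}^{K} t^{-m} - \sum_{m=0}^{K-1} t^{m}\big)$ is strictly decreasing in $t$, and since $t=\beta^2$ is strictly increasing in $\beta$ on $(0,1)$, $f$ is strictly decreasing in $\beta$. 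This gives exactly the claimed statement $0<f(\beta_2)<f(\beta_1)$ for $\beta_1<\beta_2$.

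For the second bullet (bijection from $(0,1)$ onto $(0,\infty)$): $f$ is continuous on $(0,1)$ and strictly decreasing, so it is injective and maps $(0,1)$ onto the open interval $\big(\lim_{\beta\to 1^-} f(\beta),\ \lim_{\beta\to 0^+} f(\beta)\big)$. As $\beta\to 0^+$, the term $t^{-K}$ (i.e.\ $\beta^{-2K}$) blows up, so $f(\beta)\to +\infty$. As $\beta\to 1^-$, I would evaluate the limit by L'Hôpital on $\frac{(1-\beta^{2K})^2}{1-\beta^2}$ (numerator and denominator both $\to 0$), or more cleanly from the closed form $f = \tfrac1K\big(\sum_{m=1}^{K} t^{-m} - \sum_{m=0}^{K-1} t^{m}\big)\to \tfrac1K(K - K) = 0$. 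Hence the image is exactly $(0,\infty)$, and $f$ is a bijection from $(0,1)$ onto $(0,\infty)$.

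The only mild obstacle is bookkeeping the algebraic identity $\frac{1-\beta^{2K}}{1-\beta^2}=\sum_{j=0}^{K-1}\beta^{2j}$ and shifting the summation index to get the clean telescoped form; once that is in place, both monotonicity and the two boundary limits fall out termwise, so there is no real difficulty. I would present the closed-form expression for $f$ first and then read off all four assertions (positive, strictly decreasing, $\to\infty$ at $0^+$, $\to 0$ at $1^-$) from it.
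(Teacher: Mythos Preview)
Your proof is correct and takes a genuinely different route from the paper. The paper computes $f'(\beta)$ directly, reduces the sign question to showing that a certain polynomial $p(x)=(1-K)x^{K+1}+Kx^K-(K+1)x+K$ is positive on $(0,1)$, and then argues via $p''$ that $p'$ attains its maximum at $x=K/(K+1)$ where it is negative, so $p$ is strictly decreasing with $p(1)=0$. The bijectivity is then concluded from continuity and the endpoint limits, just as you do.

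Your approach is more elementary and, arguably, cleaner: by factoring $(1-\beta^{2K})/(1-\beta^2)=\sum_{j=0}^{K-1}\beta^{2j}$ and setting $t=\beta^2$, you rewrite $f$ as $\tfrac{1}{K}\big(\sum_{m=1}^{K}t^{-m}-\sum_{m=0}^{K-1}t^{m}\big)$, a sum of manifestly monotone terms, so strict decrease and both boundary limits are read off termwise with no calculus. One tiny slip: the term $-t^{0}$ is constant, not strictly decreasing, but this is harmless since the remaining terms (in particular each $t^{-m}$ for $m\ge 1$) are strictly decreasing and $K\ge 2$ throughout the paper. What the paper's derivative argument buys is that it would generalize to perturbations of $f$ where no such clean factorization exists; what your argument buys is transparency and the avoidance of the somewhat delicate two-step maximization of $p'$.
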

\begin{proof}
Let $f'$ denote the first derivative of $f$ with respect to $\beta$. It can be shown that $f'(\beta)<0$ for $\beta \in (0,1)$ if and only if $p(x)>0$ for $x \in (0,1)$, where $p(x) = (1-K)x^{K+1}+Kx^K-(K+1)x+K$.
Now, let $p'$ and $p''$ denote the first and the second derivatives of $p$ with respect to $x$, respectively. To show that $p(x)>0$ for $x\in(0,1)$, we will use the fact that $p(1)=0$ and show that $p(x)$ is strictly decreasing on $(0,1]$.
We have,
\begin{equation*}
p'(x) = (1-K)(K+1)x^K+K^2x^{K-1}-(K+1)
\end{equation*}
and
\begin{equation*}
p''(x) = x^{K-2}K(K-1)\left[K-(K+1)x\right].
\end{equation*}
From $p''(x)$, we notice that $p'(x)$ is strictly increasing for $x \in (0,\frac{K}{K+1})$ and is strictly decreasing for $x\in (\frac{K}{K+1},1]$, and hence its maximum value on $(0,1]$ is at $x=\frac{K}{K+1}$. Hence for $x \in (0,1]$,
\begin{align*}
p'(x) &\leq p'\left(\frac{K}{K+1}\right) \\
	&=K\left(\frac{K}{K+1}\right)^{K-1}-(K+1) < 0.
\end{align*}
Therefore, $p(x)$ is a strictly decreasing function on $(0,1]$. But since $p(1)=0$, then $p(x)>0$ for $x\in(0,1)$.
So far, we have shown that $f$ is a strictly deceasing function on $(0,1)$. Now, since $f$ is a continous function on $(0,1)$ and since $\lim_{\beta \rightarrow 0}f(\beta) = \infty$ and $\lim_{\beta \rightarrow 1}f(\beta) = 0$, then $f((0,1)) = (0,\infty)$, and hence the proof is complete. 
\end{proof}

\begin{theorem}
\label{thm:symm}
For any sum-rate $R<\frac{1}{2}\log\left(1+P\phi\right)$, where $\phi$ is as defined in Lemma \ref{lemma:noiseless-sum-rate}, there exists $\epsilon>0$ such that the same sum-rate $R$ can be achieved by the concatenated coding scheme but with $\sigma_n^2$ as large as $\epsilon$.
\end{theorem}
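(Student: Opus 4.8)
The plan is to realize Theorem~\ref{thm:symm} as a specialization of Theorem~\ref{thm:main}, but carried out with the concrete parameters $(\gamma,\beta,\{\mu_{i,j}\})$ of the scheme of Section~\ref{sect:linear-symmetric} so that the continuity arguments become explicit. First I would fix a sum-rate $R<\frac{1}{2}\log(1+P\phi)$ and pick $\delta>0$ small enough that $R<\frac{1}{2}\log(1+P\phi)-\delta$. By Lemma~\ref{lemma:noiseless-sum-rate}, for $\sigma_n^2=0$ the choice $\beta^{-2K}=1+P\phi$, $\gamma=\frac{L-1}{L}$ makes the scheme achieve any sum-rate below $-K\log\beta=\frac{1}{2}\log(1+P\phi)$; the point is that this is a limiting statement in $\tilde L$, so there exists a finite $\tilde L_0$ (hence $L_0=\tilde L_0+K-1$) with $R<\frac{K}{2(\tilde L_0+K-1)}\log(1+SNR(\tilde L_0))$ at $\sigma_n^2=0$, where $SNR(\tilde L_0)$ is given by \eqref{eq:ortho-SNR} with the optimal $\mu_{i,j}$ of Lemma~\ref{lemma_isit_2}.

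Next I would perturb $\beta$ slightly downward to $\beta'<\beta$ so that the power constraint becomes strict. By Lemma~\ref{lemma:f}, $f(\beta)=\frac{(1-\beta^{2K})^2}{K(1-\beta^2)\beta^{2K}}$ is continuous and strictly decreasing, so for $\beta'$ slightly below $\beta$ we have $f(\beta')<\gamma P$ with a positive gap; then by Lemma~\ref{lemma:powerconstraint} there is room to absorb a feedback-noise term $\sigma_n^2>0$ while still satisfying \eqref{eq:strick-power-constraint}, namely as long as $f(\beta')\le \frac{\gamma P}{1+\sigma_n^2}$, which holds for all $\sigma_n^2\le \epsilon_1$ for some $\epsilon_1>0$. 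Simultaneously, shrinking $\beta$ to $\beta'$ only changes $SNR(\tilde L_0)$ at $\sigma_n^2=0$ continuously (the numerator and the denominator $g(\tilde L_0,\beta)$ are polynomials in $\beta$, and $SNR$ is increasing as $\beta$ decreases for large blocklength), so we may still guarantee $R<\frac{K}{2(\tilde L_0+K-1)}\log(1+SNR(\tilde L_0))$ at $\beta'$, $\sigma_n^2=0$.

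The final step is to turn on the feedback noise. With $\beta'$ fixed and the $\mu_{i,j}$ re-optimized via Lemma~\ref{lemma_isit_2} (or, more simply, kept at the closed forms \eqref{eq:mu} since these are near-optimal for the large $\tilde L_0$ we have selected), the resulting SNR at each receiver is $\frac{\frac{1}{K}(1-\gamma)(\tilde L_0+K-1)P}{g(\tilde L_0,\beta')+\sigma_n^2 h(\tilde L_0,\beta')}$ from \eqref{eq:snr-symm-scheme}, which is a continuous, decreasing function of $\sigma_n^2$ that equals the noiseless value at $\sigma_n^2=0$. Hence there exists $\epsilon_2>0$ such that for all $\sigma_n^2\le\epsilon_2$ we still have $R<\frac{K}{2(\tilde L_0+K-1)}\log(1+SNR(\tilde L_0))$. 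Setting $\epsilon=\min\{\epsilon_1,\epsilon_2\}$ and invoking \eqref{eq:achievable-sum-rate-concatenated}, the concatenated coding scheme with this inner code of blocklength $L_0$ achieves sum-rate $R$ over the symmetric AWGN-BC with feedback noise variance up to $\epsilon$, since the interference-nulling constraint \eqref{eq:ortho-requirement} continues to hold (it depends only on the structure of $\mathbf q,\mathbf F,\mathbf C_k$ in Lemma~\ref{lemma_isit_1}, not on $\sigma_n^2$).

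I expect the main obstacle to be bookkeeping the two competing perturbations: decreasing $\beta$ helps the SNR but tightens toward the power constraint, while we also need slack in the power constraint to accommodate $\sigma_n^2>0$. The cleanest way around this is to first secure a strict power-constraint gap at $\sigma_n^2=0$ by the $\beta\to\beta'$ move (Lemmas~\ref{lemma:f} and~\ref{lemma:powerconstraint}), and only afterwards bound how large $\sigma_n^2$ may be, using the explicit form \eqref{eq:snr-symm-scheme}; continuity then does the rest with no delicate estimates. Alternatively, one could simply cite Theorem~\ref{thm:main} with the linear scheme of Section~\ref{sect:linear-symmetric} as the given noiseless-feedback scheme, but writing out the $\beta$-perturbation makes the dependence on $\sigma_n^2$ transparent and keeps the argument self-contained.
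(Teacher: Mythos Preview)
Your overall architecture---fix parameters, secure slack in the power constraint via a $\beta$-perturbation and Lemmas~\ref{lemma:powerconstraint}--\ref{lemma:f}, then use continuity of the SNR in $\sigma_n^2$, and take $\epsilon=\min\{\epsilon_1,\epsilon_2\}$---is exactly the paper's route. However, the perturbation step has the sign wrong. Since $f$ is strictly \emph{decreasing} on $(0,1)$ (Lemma~\ref{lemma:f}), taking $\beta'<\beta$ gives $f(\beta')>f(\beta)$, not $f(\beta')<\gamma P$; decreasing $\beta$ \emph{tightens} (and in fact violates) the power constraint rather than creating slack---you yourself note this tension in your final paragraph. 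The paper perturbs in the opposite direction: it fixes $\gamma<1$ large enough that the target is still below $\tfrac12\log(1+P\gamma\phi)$, chooses $\beta$ with $-K\log\beta>R$ and $f(\beta)\le\gamma P$, and then moves to $\beta_0>\beta$ so that $-K\log\beta_0>R$ still holds while $f(\beta_0)<f(\beta)\le\gamma P$; this strict gap furnishes $\epsilon_1>0$ with $f(\beta_0)\le\frac{\gamma P}{1+\epsilon_1}$. Only then is $\tilde L_0$ chosen (from $\lim_{\tilde L\to\infty}\tilde R(\tilde L,0)=-K\log\beta_0>R$), and finally $\epsilon_2$ from continuity of $\tilde R(\tilde L_0,\cdot)$ at $0$.

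A related issue: starting from the limiting $\beta$ of Lemma~\ref{lemma:noiseless-sum-rate} (which satisfies $f(\beta)=P$, obtained with $\gamma\to1$) together with a finite-blocklength $\gamma=(L_0-1)/L_0<1$ already places you at $f(\beta)=P>\gamma P$ before any perturbation, so there is no slack to begin with. The paper sidesteps this by fixing $\gamma$ first and then selecting $\beta$ compatible with that $\gamma$. With these two corrections---perturb $\beta$ upward, and decouple $\gamma$ from $L$---your argument becomes the paper's proof verbatim.
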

\begin{IEEEproof}
For $R=0$, the proof is trivial. For $R>0$, choose $\gamma$ large enough such that $\frac{1}{2}\log\left(1+P\gamma\phi\right)>R$, where $\phi \in [1,K]$ is the solution of
\begin{equation*}
 \left(1+P\gamma\phi\right)^{K-1} - \left[1+\frac{P\gamma}{K}\phi(K-\phi)\right]^K = 0.
\end{equation*}
This allows us to choose $\beta \in [0,1]$ such that $-K\log(\beta)>R$ and $f(\beta)\leq P\gamma$. Choose, $\beta_0 \in [0,1]>\beta$ such that $-K\log(\beta)>-K\log(\beta_0)>R$. By Lemma~\ref{lemma:f}, there exists $\epsilon_1>0$ such that
\begin{equation*}
f(\beta_0)\leq\frac{P\gamma}{1+\epsilon_1}.
\end{equation*}
Define
\begin{equation*}
\tilde R(\tilde L,\sigma_n^2) = \frac{K}{2(\tilde L +K-1)}\log\left(1+SNR(\tilde L,\sigma_n^2)\right),
\end{equation*}
where $SNR(\tilde L,\sigma_n^2)$ here is given by
\begin{equation*}
SNR(\tilde L,\sigma_n^2) = \frac{\frac{1}{K}(1-\gamma)(\tilde L+K-1)P}{g(\tilde L,\beta_0)+\sigma_n^2h(\tilde L,\beta_0)}.
\end{equation*}
Since $\lim_{\tilde L \rightarrow \infty} \tilde R(\tilde L,0) = -k\log(\beta_0)>R$, there exists $\tilde L_0$ such that
\begin{equation*}
\tilde R(\tilde L_0,0) > R.
\end{equation*}
There also exists $\epsilon_2>0$ such that
\begin{equation*}
\tilde R(\tilde L_0,\epsilon_2) > R.
\end{equation*}
Let $\epsilon = \min\{\epsilon_1,\epsilon_2\}$. Since $\tilde R(\tilde L_0,\epsilon)\geq \max\{\tilde R(\tilde L_0,\epsilon_1),\tilde R(\tilde L_0,\epsilon_2)\}$ and since $f(\beta_0)\leq\frac{P\gamma}{1+\epsilon}$, by \eqref{eq:achievable-sum-rate-concatenated} and by Lemma~\ref{lemma:powerconstraint}, we have found $\gamma$, $\beta_0$, and $\tilde L_0$ such that the concatenated coding scheme achieves any sum-rate below $\tilde R(\tilde L_0,\epsilon)>R$ for feedback noise variance as large as $\epsilon$. Hence, $R$ is achieved.
\end{IEEEproof}

\subsection{Inner Code Blocklength}
In this section, we find an upper bound on the inner code blocklength required for the concatenated coding scheme to start achieving a certain sum-rate above the no-feedback sum-capacity. To do that, we assume noiseless feedback and make use of the $SNR$ lower bound in Lemma \ref{lemma:snrbound} and of Lemma \ref{lemma:powerconstraint}. For sum-rates close to the bound in Lemma \ref{lemma:noiseless-sum-rate}, the upper bound becomes tighter because for larger sum-rates the inner code grows in length which makes $\mu_{i,j}$ in \eqref{eq:mu} approximately optimal, the power lost in Lemma \ref{lemma:powerconstraint} negligible, and $SNR_{lb}(\tilde L)$ of Lemma \ref{lemma:snrbound} closer to $SNR(\tilde L)$.

\begin{lemma}
\label{lemma:blocklength-upper-bound}
Fix $\gamma,\beta \in (0,1)$ such that $-K\log(\beta) > \frac{1}{2}\log (1+P)$, and let $a_{lb}$ be defined as in Lemma~\ref{lemma:snrbound}. Assume noiseless feedback, i.e., $\sigma_n^2=0$. For any sum-rate $R$ such that
\begin{equation*}
\frac{1}{2}\log (1+P)<R<-K\log\beta,
\end{equation*}
let $L_0$ be the smallest integer $\tilde L$ such that
\begin{equation*}
\frac{K}{2(\tilde L+K-1)}\log\left (1+SNR_{lb}(\tilde L)\right) \geq R,
\end{equation*}
where $SNR_{lb}(\tilde L)$ is defined as in Lemma \ref{lemma:snrbound}.
Then
\begin{equation}
\label{eq:blocklength-ub}
L_0 \leq \left\lceil{\frac{-W(-\frac{a\ln2}{b}2^{-\frac{ac}{b}})}{a\ln2} - \frac{a}{b}}\right\rceil,
\end{equation}
where
\begin{equation*}
a = 2\left(\frac{R}{K}+\log\beta \right),
\end{equation*}
\begin{equation*}
b = a_{lb}(1-\gamma)2^{-2\frac{R}{K}(K-1)},
\end{equation*}
\begin{equation*}
c = \left[a_{lb}(1-\gamma)(K-1)+1\right]2^{-2\frac{R}{K}(K-1)},
\end{equation*}
and $W$ is the Lambert W function, i.e., $W(x)$ is the solution to $x=W(x)e^{W(x)}$. 
\end{lemma}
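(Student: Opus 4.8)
The plan is to peel off the logarithm and the explicit $SNR_{lb}$ expression so that the inequality defining $L_0$ becomes an elementary ``affine $\geq$ exponential'' inequality in $\tilde L$, which can then be inverted in closed form with the Lambert $W$ function. First I would rewrite the requirement $\frac{K}{2(\tilde L+K-1)}\log\!\left(1+SNR_{lb}(\tilde L)\right)\geq R$ in the equivalent form
\[
SNR_{lb}(\tilde L)\;\geq\;2^{\frac{2R}{K}(\tilde L+K-1)}-1,
\]
and substitute $SNR_{lb}(\tilde L)=\dfrac{a_{lb}(1-\gamma)(\tilde L+K-1)\frac{P}{K}}{\beta^{2\tilde L}}$ from Lemma~\ref{lemma:snrbound}. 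Clearing the positive denominator $\beta^{2\tilde L}$ and using $\beta^{2\tilde L}=2^{2\tilde L\log\beta}$ together with $2^{\frac{2R}{K}(\tilde L+K-1)}=2^{\frac{2R}{K}(K-1)}\,2^{\frac{2R}{K}\tilde L}$ merges the two exponential factors into a single power $2^{a\tilde L}$ with $a=2\!\left(\frac{R}{K}+\log\beta\right)$, while the term $(\tilde L+K-1)$ is the source of the affine part and of the $(K-1)$-dependent constant.

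Next I would pass to a sufficient condition so that the resulting inequality has a closed-form solution: replacing $2^{\frac{2R}{K}(\tilde L+K-1)}-1$ by $2^{\frac{2R}{K}(\tilde L+K-1)}$ and discarding the positive contribution of $\beta^{2\tilde L}$ reduces the requirement to an inequality of the form
\[
b\,\tilde L+c\;\geq\;2^{a\tilde L}
\]
for the positive constants $b,c$ appearing in the statement. The hypothesis $R<-K\log\beta$ forces $a<0$ (and $R>\tfrac12\log(1+P)$ keeps us in the regime where the bound is a meaningful positive integer), so the right-hand side is strictly decreasing in $\tilde L$ while the left-hand side is strictly increasing; hence $b\tilde L+c-2^{a\tilde L}$ is strictly increasing with a unique real root $\tilde L^{\ast}$, and the sufficient inequality holds for every $\tilde L\geq\tilde L^{\ast}$. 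Since satisfying the sufficient condition implies satisfying the original one, the integer $\lceil\tilde L^{\ast}\rceil$ meets the defining condition of $L_0$, whence $L_0\leq\lceil\tilde L^{\ast}\rceil$.

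It then remains to solve $b\tilde L^{\ast}+c=2^{a\tilde L^{\ast}}$. Writing $2^{a\tilde L^{\ast}}=e^{(a\ln 2)\tilde L^{\ast}}$ and setting $w=-(a\ln 2)\tilde L^{\ast}-\tfrac{ac\ln 2}{b}$ collapses the equation to $w\,e^{w}=-\tfrac{a\ln 2}{b}\,2^{-ac/b}$; because $a<0$ and $b,c>0$ the right-hand side is positive, so the principal branch of $W$ applies and $w=W\!\left(-\tfrac{a\ln 2}{b}2^{-ac/b}\right)$. Back-substituting gives $\tilde L^{\ast}$ in closed form, and taking the ceiling produces the bound \eqref{eq:blocklength-ub}.

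The main obstacle is the bookkeeping in the first two steps — organizing the product of $\beta^{2\tilde L}$ with the rate exponential into the single base-$2$ exponential $2^{a\tilde L}$ and isolating the $(K-1)$-dependent pieces so that $b$ and $c$ emerge exactly as stated — together with getting the Lambert-$W$ inversion and branch right (verifying the argument of $W$ lies in the domain of the principal branch and that $\tilde L^{\ast}$ is the relevant crossover, so rounding up is legitimate). The slack introduced by passing to the sufficient condition is harmless for the inequality $L_0\leq\lceil\tilde L^{\ast}\rceil$, and it is precisely this slack that the surrounding discussion quantifies when noting that the bound tightens as $R$ approaches $-K\log\beta$.
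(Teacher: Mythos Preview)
Your approach is essentially the paper's: rewrite the defining inequality for $L_0$ as an affine-versus-exponential comparison $b\tilde L+c\gtrless 2^{a\tilde L}$, observe that $a<0$ forces a unique real crossover, solve that crossover with the Lambert $W$ function, and round up. The paper compresses the algebra into ``after some manipulations, the preceding equation in $\tilde L$ reduces to $2^{a\tilde L}=b\tilde L+c$'' and then makes exactly your monotonicity argument (left side decreasing exponential, right side a line of positive slope) to justify the ceiling.

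One bookkeeping point to watch: the $+1$ appearing in the stated constant $c=[a_{lb}(1-\gamma)(K-1)+1]\,2^{-2R(K-1)/K}$ comes from the $\beta^{2\tilde L}$ term (the ``$1$'' in $1+SNR_{lb}$) being replaced by the constant $1$, not from discarding it; if you literally drop $\beta^{2\tilde L}$ as you describe, your $c$ will be missing that $+1$. Your version is the one that yields a genuine \emph{sufficient} condition and hence a bona fide upper bound on $L_0$; replacing $\beta^{2\tilde L}$ by $1$ instead overstates the left-hand side and, strictly speaking, pushes the crossover the wrong way. The paper does not dwell on this distinction, and numerically the two versions differ only by the tiny term $(1-\beta^{2\tilde L})2^{-2R(K-1)/K}$, but since you explicitly flag ``so that $b$ and $c$ emerge exactly as stated'' you should be aware that your (correct) simplification produces a slightly different $c$ than the one printed.
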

\begin{IEEEproof}
Define 
\begin{equation*}
R_{lb}(\tilde L) = \frac{K}{2(\tilde L+K-1)}\log \left(1+SNR_{lb}(\tilde L)\right),
\end{equation*}
where $SNR_{lb}(\tilde L)$ is defined as in Lemma \ref{lemma:snrbound}.

To derive the upper bound on $L_0$, we solve for $\tilde L$ that satisfies
\begin{equation*}
R_{lb}(\tilde L) = R.
\end{equation*}
After some manipulations, the preceding equation in $\tilde L$ reduces to
\begin{equation}
\label{eq:blocklength-eq}
2^{a \tilde L} = b\tilde L + c,
\end{equation}
which is known to have, by substitution, the term inside the ceil operator in \eqref{eq:blocklength-ub} as a solution in $\tilde L$. 

It can be easily shown that $R_{lb}(1) \leq \frac{1}{2}\log (1+P)$ and that $\lim_{\tilde L \rightarrow \infty} R_{lb}(\tilde L) = -K\log(\beta)$. Hence, there exists at least one $\tilde L$ such that $R_{lb}(\tilde L) = R$. Now, let us analyze \eqref{eq:blocklength-eq}. The left-hand side of the equation is a decreasing exponential function in $\tilde L$ because $a$ is negative. The right-hand side is a straight line in $\tilde L$ with a positive slope. Hence, \eqref{eq:blocklength-eq} can have one real valued solution only, call it $\hat L$. Then, $R_{lb}(\tilde L)\geq R$ for all $\tilde L \geq \hat L$. This validates the use of the ceil operater in \eqref{eq:blocklength-ub}. 

\end{IEEEproof}

\begin{corollary}
Let $f$ be defined as in Lemma \ref{lemma:f} and $\phi$ defined as in Lemma \ref{lemma:noiseless-sum-rate}. For any sum-rate $R$ such that
\begin{equation*}
\frac{1}{2}\log (1+P)<R<\frac{1}{2}\log (1+P\phi),
\end{equation*}
choose $\gamma \in (0,1)$ such that 
\begin{equation}
\label{eq:gamma}
\gamma >\gamma_{lb} = \frac{1}{P}f(2^{-\frac{R}{K}}).
\end{equation}
Choose $\beta$ such that 
\begin{equation}
\label{eq:beta}
\beta = f^{-1}(\gamma P),
\end{equation}
where $f^{-1}$ is the inverse of $f$.
For noiseless feedback (i.e., $\sigma_n^2=0$), let $L_0$ be the smallest $\tilde L > K$ such that 
\begin{equation}
\frac{K}{2(\tilde L+K-1)}\log\left (1+SNR^*(\tilde L)\right) \geq R,
\end{equation}
where given $\tilde L$, $SNR^*(\tilde L)$ is given by \eqref{eq:ortho-SNR} and that follows Lemma \ref{lemma_isit_1} and Lemma \ref{lemma_isit_2} using optimal $\gamma$ and $\beta$ values. Then $L_0$ can be upper bounded as follows
\begin{equation}
\label{eq:blocklength-upper-bound}
L_0 \leq \max\{K+1,\left\lceil{\frac{-W(-\frac{a\ln2}{b}2^{-\frac{ac}{b}})}{a\ln2} - \frac{a}{b}}\right\rceil\},
\end{equation}
where $a$, $b$, and $c$ are defined as in Lemma \ref{lemma:blocklength-upper-bound} with $\gamma$ and $\beta$ values chosen as in \eqref{eq:gamma} and \eqref{eq:beta}.
\end{corollary}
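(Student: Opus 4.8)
The plan is to combine the unconstrained blocklength bound of Lemma~\ref{lemma:blocklength-upper-bound} with the specific (optimal) choices of $\gamma$ and $\beta$ made in \eqref{eq:gamma} and \eqref{eq:beta}, and to account for the extra hypothesis $\tilde L > K$ by simply taking a maximum with $K+1$. First I would verify that the chosen parameters are legitimate: since $\frac{1}{2}\log(1+P) < R < \frac{1}{2}\log(1+P\phi)$, by Lemma~\ref{lemma:f} the quantity $\gamma_{lb} = \frac{1}{P}f(2^{-R/K})$ lies in $(0,1)$ because $f$ is a bijection from $(0,1)$ onto $(0,\infty)$ and $2^{-R/K}\in(0,1)$; hence a valid $\gamma \in (\gamma_{lb},1)$ exists. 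Then $\beta = f^{-1}(\gamma P)$ is well-defined and lies in $(0,1)$, and since $f$ is decreasing and $\gamma P > \gamma_{lb}P = f(2^{-R/K})$, we get $\beta < 2^{-R/K}$, i.e., $-K\log\beta > R > \frac{1}{2}\log(1+P)$. So the hypotheses of Lemma~\ref{lemma:blocklength-upper-bound} are met with these $\gamma,\beta$.

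Next I would argue that the power constraint is satisfied so that the scheme with these parameters is actually realizable. By construction $f(\beta) = \gamma P$ (using $\sigma_n^2 = 0$ here), so the condition \eqref{eq:strick-power-constraint} of Lemma~\ref{lemma:powerconstraint} holds with equality, and therefore the true power constraint \eqref{eq:powerconstraint} holds for every $\tilde L$. This guarantees that $SNR^*(\tilde L)$ — the SNR obtained from the truly optimal $\mu_{i,j}$ of Lemma~\ref{lemma_isit_2} under these $\gamma,\beta$ — dominates $SNR(\tilde L)$ computed with the asymptotic choice \eqref{eq:mu}, which in turn is lower bounded by $SNR_{lb}(\tilde L)$ of Lemma~\ref{lemma:snrbound}. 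Consequently $\frac{K}{2(\tilde L+K-1)}\log(1+SNR^*(\tilde L)) \geq \frac{K}{2(\tilde L+K-1)}\log(1+SNR_{lb}(\tilde L)) = R_{lb}(\tilde L)$, so any $\tilde L$ achieving $R_{lb}(\tilde L)\geq R$ also achieves the target rate for the optimized scheme.

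Finally I would close the argument: Lemma~\ref{lemma:blocklength-upper-bound} gives that the smallest $\tilde L$ with $R_{lb}(\tilde L)\geq R$ is at most $\lceil \frac{-W(-\frac{a\ln 2}{b}2^{-ac/b})}{a\ln 2} - \frac{a}{b}\rceil$ with $a,b,c$ as defined there (note the monotonicity established in that proof — once $R_{lb}$ crosses $R$ it stays above, so the "smallest $\tilde L$" is well-behaved). The corollary, however, additionally requires $\tilde L > K$; so the smallest admissible $\tilde L$ is at most the maximum of $K+1$ and that Lambert-$W$ expression, which is exactly \eqref{eq:blocklength-upper-bound}. I expect the only real subtlety — the "main obstacle" — to be the domination chain $SNR^* \geq SNR \geq SNR_{lb}$: one must be careful that the asymptotically-optimal $\mu_{i,j}$ of \eqref{eq:mu} still satisfies the (non-asymptotic) power budget \eqref{eq:powerconstraint} for all finite $\tilde L$, which is precisely what Lemma~\ref{lemma:powerconstraint} with the equality $f(\beta)=\gamma P$ secures; everything else is bookkeeping on top of the already-proved Lemma~\ref{lemma:blocklength-upper-bound}.
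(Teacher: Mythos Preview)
Your proposal is correct and follows essentially the same route as the paper: verify that $\gamma>\gamma_{lb}$ forces $\beta=f^{-1}(\gamma P)<2^{-R/K}$ so that $-K\log\beta>R$, invoke Lemma~\ref{lemma:powerconstraint} (with $f(\beta)=\gamma P$) to secure the power constraint for all $\tilde L>K$, then apply Lemma~\ref{lemma:blocklength-upper-bound} and append the $\max\{K+1,\cdot\}$ to respect the structural requirement $\tilde L>K$. Your explicit articulation of the domination chain $SNR^*\geq SNR\geq SNR_{lb}$ is a useful clarification that the paper leaves implicit; the only small slip is that ``$f$ is a bijection $(0,1)\to(0,\infty)$'' alone does not give $\gamma_{lb}<1$ --- you need the hypothesis $R<\tfrac12\log(1+P\phi)$ together with the fact (from Lemma~\ref{lemma:noiseless-sum-rate}) that $f\big((1+P\phi)^{-1/(2K)}\big)=P$ to conclude $f(2^{-R/K})<P$.
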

\begin{IEEEproof}
First, we choose $\gamma$ such that the linear coding scheme for the noiseless feedback case can achieve a sum-rate larger than $R$. To do so, we need
\begin{equation*}
-K\log{\beta}>R.
\end{equation*}
This implies
\begin{equation*}
\beta<2^{-\frac{R}{K}},
\end{equation*}
which also implies that
\begin{equation*}
f(\beta)>f(2^{-\frac{R}{K}}).
\end{equation*}
But for $\tilde L \rightarrow \infty$, the power constraint of the linear scheme reduces to $f(\beta) = \gamma P$. Then
\begin{equation*}
\gamma>\frac{1}{P}f(2^{-\frac{R}{K}}),
\end{equation*}
where the right-hand side is exactly $\gamma_{lb}$.

Now, for any $\gamma>\gamma_{lb}$, choosing $\beta = f^{-1}(\gamma P)$ satisfies the power constraint for any $\tilde L > K$ (Lemma \ref{lemma:powerconstraint}). The proof then follows by Lemma \ref{lemma:blocklength-upper-bound}. Note that the use of the $\max$ function in \eqref{eq:blocklength-upper-bound} function is to ensure that the upper bound on $L_0$  is no smaller than $K+1$. This is because of the way the linear scheme is constructed that requires $\tilde L>K$ for $R > \frac{1}{2}\log (1+P)$. By the discussion in the proof of Lemma \ref{lemma:blocklength-upper-bound}, larger blocklength is still a valid upper bound on $L_0$.

\end{IEEEproof}

In Fig. \ref{fig:blocklength}, we plot $\tilde L_{ub}$, which is the right-hand side of \eqref{eq:blocklength-upper-bound}, for sum-rates between $C_{nf} + 0.01\Delta$ and $C_{nf} + 0.9\Delta$, where $C_{nf} = \frac{1}{2}\log(1+\rho)$ and $\Delta =  \frac{1}{2}\log(1+\phi P) - \frac{1}{2}\log(1+P)$. We consider $P=10$ and $K=2$. For each sum-rate point, the $\gamma$ chosen was $\gamma = \gamma_{lb} + 0.2(1-\gamma_{lb})$. 

\begin{figure}[h]
\centering
 \includegraphics[width=0.7\textwidth]{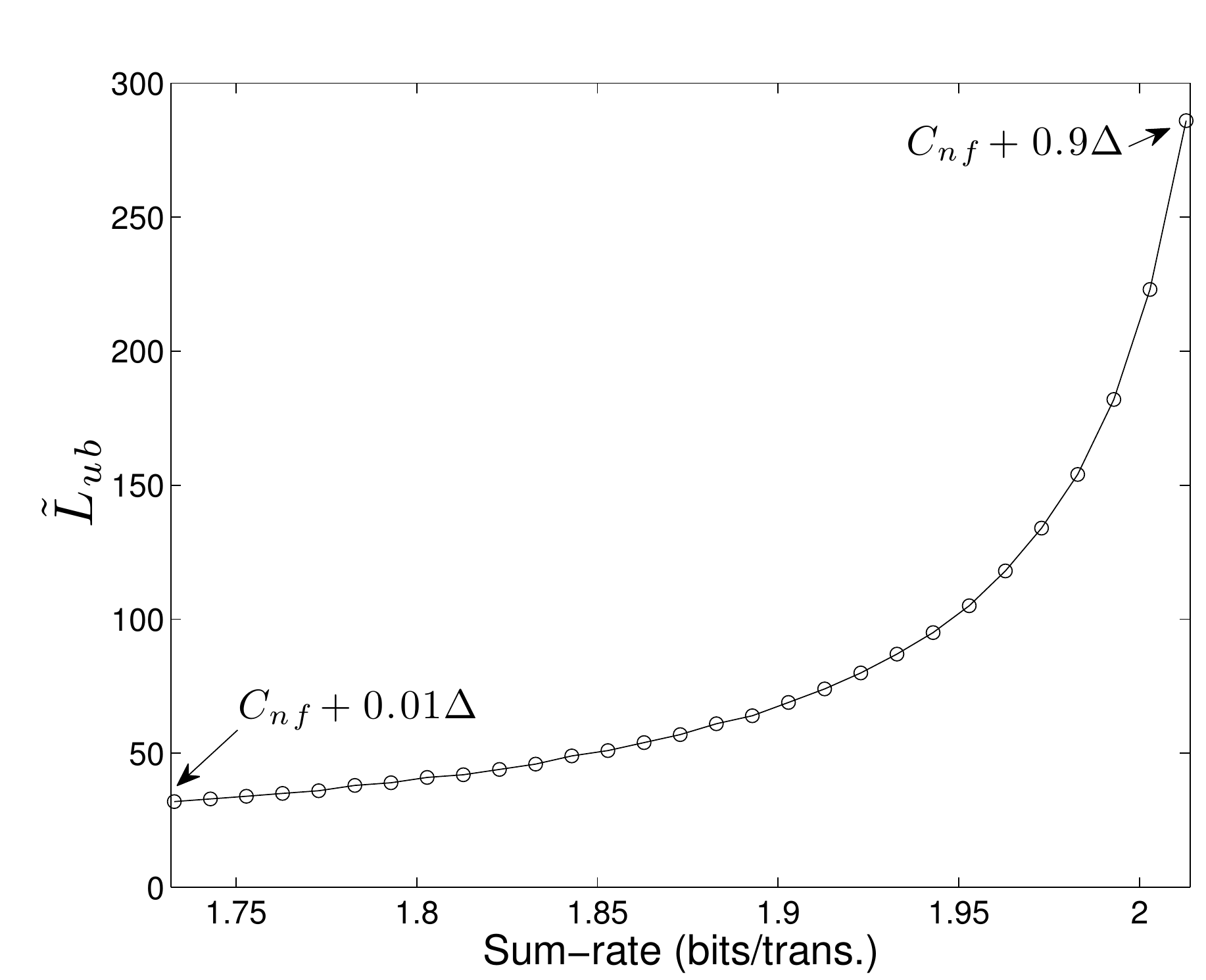}
\caption{Upper bound on the $\tilde L$ needed for the concatenated coding scheme to start to outperform a certain sum-rate for noiseless feedback. The values of the channel parameters are: $P=10$ and $K=2$.}
\label{fig:blocklength}
\end{figure}

\subsection{Sum-Rate Versus Feedback Noise Level}
In this section, we present, using computer experiments for numerical optimization, the achievable sum-rates given a certain feedback noise level. Specifically, we calculated the following
\begin{equation}
\label{eq:num-opt-equation}
R^* = \sup_{\substack{\tilde{L}\in\mathbb N \\ \beta \in (0,1) \\ \gamma \in [0,1]}} \frac{K}{2(\tilde{L}+K-1)}\log\left(1+SNR^*(\tilde{L},\beta,\gamma)\right),
\end{equation}
where given $\tilde L$, $\beta$ and $\gamma$, $SNR^*(\tilde{L},\beta,\gamma)$ is the SNR at any of the receivers given by \eqref{eq:ortho-SNR} and calculated using Lemma \ref{lemma_isit_1} and Lemma \ref{lemma_isit_2}.

\begin{figure}[h]
\centering
 \includegraphics[width=0.75\textwidth]{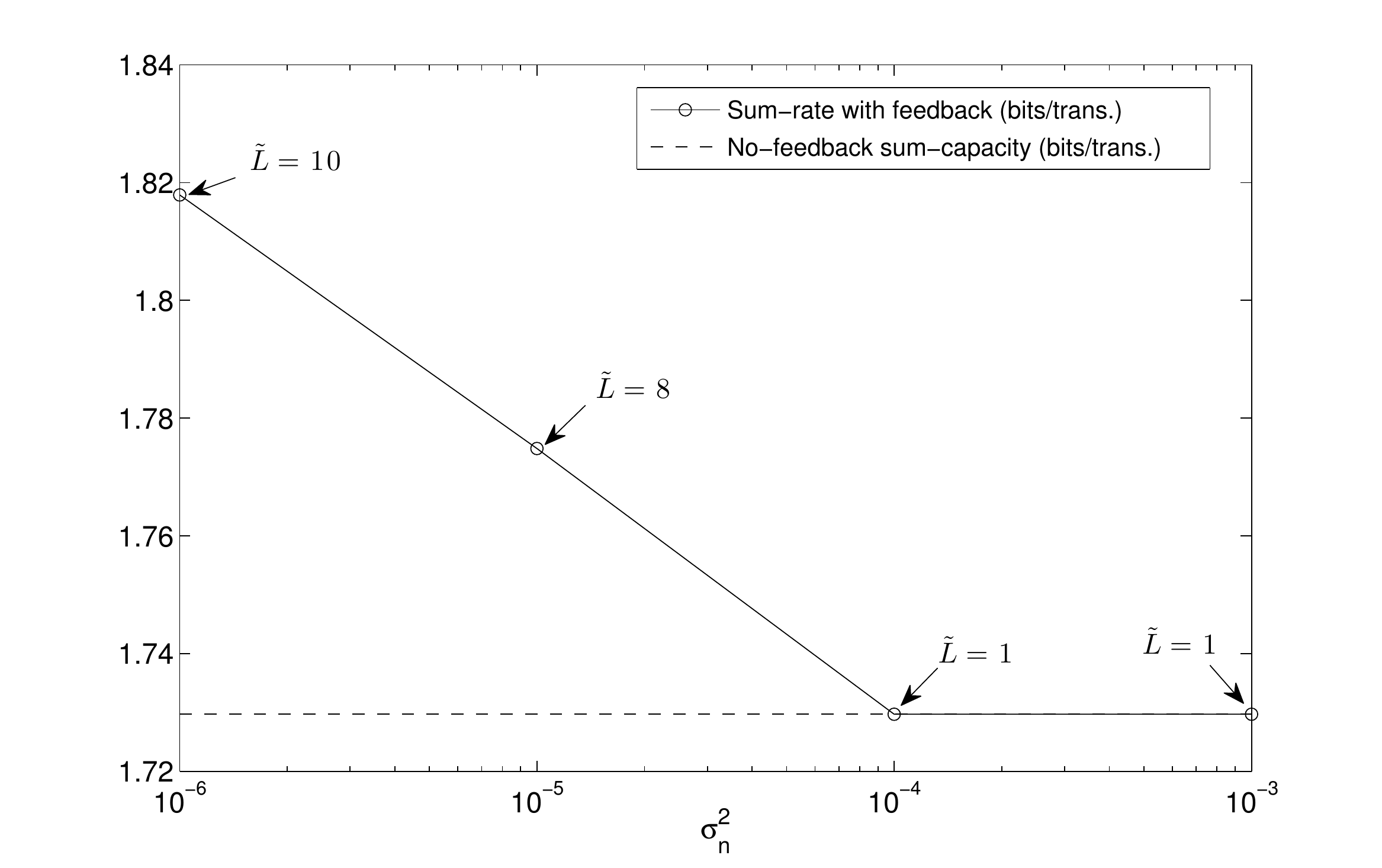}
\caption{Comparison between the sum-rates achievable by the proposed concatenated coding scheme and the no-feedback sum-capacity for $P=10$ and $K=2$. }
\label{fig:perpn}
\end{figure}

In Fig. \ref{fig:perpn}, we plot $R^*$ as a function of $\sigma_n^2$ for $P=10$ and $K=2$. The chosen points for $\sigma_n^2$ are $10^{-6}$, $10^{-5}$, $10^{-4}$, and $10^{-3}$. On the curve, the optimal $\tilde{L}$ for each  $\sigma_n^2$ is also shown. From the plot, we can see that for $\sigma_n^2=10^{-3}$ and $\sigma_n^2=10^{-4}$, the optimal $\tilde{L}$ is $1$, i.e., feedback is not utilized. (It is important to note that for the symmetric AWGN-BC orthogonal signaling is optimal for open-loop coding, which is encompassed by our scheme by having $\tilde{L}=1$ and $\gamma=0$). However, for $\sigma_n^2=10^{-5}$ and $\sigma_n^2=10^{-6}$, the concatenated coding scheme outperforms the no-feedback sum-capacity with optimal values for $\tilde{L}$ of $8$ and $10$, respectively. Note that as $\sigma_n^2\rightarrow0$, $R^*$ should approach the bound in Lemma \ref{lemma:noiseless-sum-rate} with the optimal $\tilde{L} \rightarrow \infty$. On the other hand, for all values of $\sigma_n^2$ greater than $10^{-3}$, the optimal $\tilde{L}$ should remain equal to 1 (with $\gamma=0$), at which open-loop coding outperforms the use of feedback information.

%For our nurmerical results, we will only consider positive feedback noise level, i.e. $\sigma_n^2>0$. From Lemma \ref{lemma:negative-result}, we know that for $\tilde{N}=\infty$ the objective function in \eqref{eq:num-opt-equation} is exactly zero. We also know that the objective function is always finite. Hence, the supremum could have been replaced by a maximum in this case. 

\section{Concatenated Coding for the Two-user AWGN-BC With One Noisy Feedback Link}
\label{sect:single-feedback}
In this section, we present a concatenated coding scheme for the two-user AWGN-BC with one noisy feedback link that uses the scheme presented in~\cite{bhaskaran}, which we will call the Bhaskaran scheme, with some modifications as an inner code. We will show that any rate tuple achieved by the Bhaskaran scheme for the noiseless feedback case, can be achieved by concatenated coding for the noisy feedback case if the noise variance in the feedback link is sufficiently small but not necessarily zero. 

The channel setup at hand is the same as in Section~\ref{sect:channel-setup}, but with $K=2$ and only one feedback link from one of the receivers. Without loss of generality, we will assume that reciever 1 has a feedback link to the transmitter and no feedback link from receiver 2. To follow the same channel description of Section~\ref{sect:channel-setup}, we can equivalently set $\sigma_{n_2}^2\mathrel{\mathop:}=\infty$ to render the feedback information from receiver 2 useless.

\subsection{Bhaskaran Scheme}
First, we start by a quick description of the original Bhaskaran scheme \cite{bhaskaran} that was designed for the noiseless feedback case. The transmitter forms two signals each intended to a respective receiver, and then transmitts the sum of the two signals. Let the signal intended to receiver 1 at time $\ell$ be $x_1[\ell]$ and that of receiver 2 be $x_2[\ell]$. Then, $x[\ell]=x_1[\ell]+x_2[\ell]$. 

For the receiver with the feedback link, which is assumed to be receiver 1, to form $x_1[\ell]$, the transmitter will use the linear feedback scheme presented in~\cite{merhav-weissman}, which is an extension of the S-K scheme~\cite{Schal1}, but for the Costa channel~\cite{dirtypaper} where $x_2[\ell]$ is considered to be the interfering signal and $z_1[\ell]$ is considered to be the noise. Assuming a fraction $\delta \in [0,1]$ of the power is allocated to $x_1[\ell]$, and let $P_1=\delta P$, then rates up to $R_1^{pf}$ are achievable to receiver 1, where
\begin{equation}
R_1^{pf} = \frac{1}{2}\log\left(1+\frac{P_1}{\sigma_{z_1}^2}\right).
\end{equation}

On the other hand, receiver 2 will have a fraction of the power $P_2=(1-\delta)P$ and will consider $x_1[\ell]$ as noise. Receiver 2 will ignore the first transmission, $x[1]$. By the structure of the S-K scheme, $x_1[2], x_1[3], \dots$ is a colored Gaussian process, hence the transmitter will form $x_2[\ell]$ as the ouput of an open loop coding scheme for the additive colored Gaussian noise channel, where the noise sequence is $\{x_1[\ell]+z_2[\ell]\}_{\ell>1}$. Using water-filling in the frequency domain as described in \cite{cover_book}, it is shown in \cite{bhaskaran} that any rate below $R_2^{pf}$ is achievable to receiver 2, where
\begin{equation}
R_2^{pf} = 
\begin{cases}
\int_0^\frac{1}{2} \log\left(\frac{2g(0)+P_2}{\tilde\sigma_{z_2}^2(f)}\right) df, & \text{if } 2g(0)+P_2 > \tilde\sigma_{z_2}^2(0) \\
\int_0^\frac{1}{2} \log\left(\frac{\tilde\sigma_{z_2}^2(a)}{\tilde\sigma_{z_2}^2(f)}\right) df, & \text{otherwise},
\end{cases}
\end{equation}
and 
\begin{equation}
\tilde\sigma_{z_2}^2(f) = \sigma_{z_2}^2 + \frac{P_1(\alpha^2-1)}{\alpha^2+1-2\alpha cos(2\pi f)},
\end{equation}
\begin{equation}
g(x)=\int_x^\frac{1}{2} \log\left(\tilde\sigma_{z_2}^2(f)\right) df,
\end{equation}
$\alpha = \sqrt{1+\nicefrac{P_1}{\sigma_{z_1}^2}}$ and $a$ is solution of $(1-2a)\tilde\sigma_{z_2}^2(a)-2g(a)=P_2$ \cite{bhaskaran}.

\subsection{Noisy-Bhaskran Scheme}
We discuss here some modifications on the Bhaskaran scheme \cite{bhaskaran} to accomodate the presence of noise in the feedback link. We will call the modified scheme Noisy-Bhaskaran. The necessary modifications are the following:
\begin{enumerate}
\item The transmitter in the original Bhaskaran scheme forms $x_1[\ell]$ as a linear combination of $z_1[1],z_1[2],\dots,z_1[\ell-1]$ for $\ell\geq2$. For the noisy feedback case, the transmitter does not know $z_1[1],z_1[2],\dots,z_1[\ell-1]$, however it has knowledge of $z_1[1]+n_1[1],z_1[2]+n_1[2],\dots,z_1[\ell-1]+n_1[\ell-1]$. We will assume that the transmitter uses the sequence $z_1[1]+n_1[1],z_1[2]+n_1[2],\dots,z_1[\ell-1]+n_1[\ell-1]$ thinking it is $z_1[1],z_1[2],\dots,z_1[\ell-1]$, and for forming the scaling coefficients uses $\sigma_{z_1}^2+\sigma_{n_1}^2$ instead of $\sigma_{z_1}^2$. Another way to think of this, is that the transmitter will be forming $x_1[\ell]$ extacly as if the channel at hand was of forward noise $z_1[\ell]+n_1[\ell]$ to receiver 1 and of noiseless feedback. Receiver 1 will form the estimate of the message point $\theta$ exactly as in the original Bhaskaran scheme assuming the transmitter is operating for noiseless feedback.
\item For receiver 2, following the previous step the sequence $x_1[2],x_1[3],\dots$ is still a Gaussian process whose covariance matrix is as described in \cite{bhaskaran} but with $\sigma_{z_1}^2$ replaced by $\sigma_{z_1}^2+\sigma_{n_1}^2$.
\end{enumerate}

For the receiver with feedback, the message is mapped to a parameter $\theta$ for linear coding. Since for receiver 2 we are using open loop coding, the tranmsitter decides on a codeword corresponding to the message, call it $W_2$, intended to receiver 2 before starting transmission. Hence, $x_2[1], x_2[2], \dots,$ and $ x_2[L]$ are known to the transmitter before tranmission. In Bhaskaran scheme, as in~\cite{merhav-weissman}, the transmitter forms $x_1[\ell]$ exactly as in the S-K scheme except that interference is subtracted in the first transmission. 
We will now follow a similar vector representation as Section~\ref{sect:linear-general} for receiver 1 by assuming that interference from $x_2[\ell]$ is not present.  Let $\hat\theta$ be the estimate of $\theta$ at receiver 1, then, and similar to~\eqref{eq:thetahat}, we can write
\begin{equation}
\label{eq:single-user-theta}
 \hat\theta = \mathbf q_1^T\mathbf g_1\theta +\mathbf q_1^T(\mathbf I+ \mathbf F_1)\mathbf z_1 + \mathbf q_1^T\mathbf F_1\mathbf n_1,
\end{equation}
where $\mathbf I$ is the idendity matrix. 
The receive SNR at receiver 1 can be written as
\begin{equation}
\label{eq:single-user-snr}
SNR(L,\sigma_{n_1}^2) = \frac{(\mathbf q_1^T\mathbf g_1)^2E[\theta^2]}{\sigma_{z_1}^2\|\mathbf q_1^T(\mathbf I+ \mathbf F_1) \|^2 + \sigma_{n_1}^2\|\mathbf q_1^T\mathbf F_1\|^2},
\end{equation}
where the dependence of the SNR on $L$ and $\sigma_{n_1}^2$ was made explicit. Note that the second argument of $SNR(L,\sigma_{n_1}^2)$ only captures $\sigma_{n_1}^2$ that explicitly appears in \eqref{eq:single-user-snr}, i.e., it does not capture the possible dependence of $\mathbf q_1$, $\mathbf g_1$, or $\mathbf F_1$ on $\sigma_{n_1}^2$. 

We will assume that the power spent for interference subtraction in the first transmission will be taken out from the power allocated to $x_1[k]$. For blocklength of $L$, the total power available to $x_1[\ell]$ is $L P_1$. Assume that the power spent for interference subtraction is $\delta_{IS}(L) L P_1$, where $\delta_{IS}(L)$ is a function of $L$ with range $[0,1]$. Although $\delta_{IS}(L)$ may have to be larger than 1 for small $L$, for our purposes we will set $\delta_{IS}(L)=1$ when interference substraction requires $\delta_{IS}(L)>1$, which we will only happen for small $L$ because, and as discussed in~\cite{merhav-weissman} and~\cite{bhaskaran},  $\delta_{IS}(L) L P_1 \rightarrow 0$ as $L \rightarrow \infty$. Now, we can write the power constraint on the feedback scheme as such
\begin{equation}
\label{eq:single-user-pwrconstraint}
 \mathbf g_1^T \mathbf g_1 E[\theta^2]+ (\sigma_{z_1}^2+\sigma_{n_1}^2) \|\mathbf F_1\|_F^2 \leq  L (1-\delta_{IS}(L)) P_1.
\end{equation}

Finally, we like to note that constructing $\mathbf g_1$, $\mathbf q_1,$ and $\mathbf F_1$ as in the Bhaskaran scheme, it can be shown that
\begin{equation}
\lim_{L\rightarrow \infty} \frac{1}{2L}\log\left(1+ SNR(L,0)\right) = R_1^{pf}.
\end{equation}
%In the noiseless feedback case, since we are interested in large enough blocklength of the Bhaskaran scheme, the power spent on subtraction the interference term in the first transmission fades to zero. However, in the following section, we will be interested in using the Bhaskaran scheme for finite length as an inner code, at which we dicuss how to we can still meet the power constraint.

\subsection{Concatenated Coding Scheme}
\label{sect:single-feedback-concatenated}
The concatenated coding scheme we will present here is similar to the scheme described in Section~\ref{sect:concatenated-general} with slight modification to accomodate the use of open loop coding in the inner code. 

Consider that we are using the Noisy-Bhaskaran scheme for a finite blocklength of $L$. From \eqref{eq:single-user-theta}, we observe that for finite blocklength $L$, the stochastic relation between $\theta$ and $\hat\theta$ can be modeled as an effective scalar AWGN channel without feedback whose input is $\theta$ and output is $\hat\theta$. The SNR of this effective channel, which in this case is a scalar AWGN channel without feedback, is given by~\eqref{eq:single-user-snr}. Similar to Section~\ref{sect:concatenated-general}, using open-loop coding for the AWGN channel to code over the latter effective channel, we can achieve any rate $R_1\geq0$ to receiver 1 satisfying
\begin{equation}
R_1<\frac{1}{2L}\log\left(1+SNR(L,\sigma_{n_1}^2)\right),
\end{equation}
where $SNR(L,\sigma_{n_1}^2)$ is as defined in~\eqref{eq:single-user-snr}. Note that if \eqref{eq:single-user-pwrconstraint} is satisfied by the Noisy-Bhaskaran scheme for blocklength of $L$, then the overall code (i.e., with open-loop coding) satisfies the average power constraint $P_1$ of receiver 1.

Now, assume that for the open loop code of receiver 2, the codewords are of length $L$ and the codebook is of size $2^{LR'_2}$, where $R'_2 \in [0,\infty]$. For convenience, we will assume that $2^{LR'_2}$ is an integer. Note that all the codewords of the codebook have their first entry equal to zero. Assume that the message $W_2$ intended to receiver 2 is in $\{w_1,w_2,\dots,w_{2^{LR'_2}}\}$. Let the decision of the decoder at receiver 2 be $\hat W_2$ whose range is $\{w_1,w_2,\dots,w_{2^{LR'_2}}\}$. The stochastic relation between $W_2$ and $\hat W_2$ can be modeled as a discrete memoryless channel (DMC) with input and output alphabet $\{w_1,w_2,\dots,w_{2^{LR'_2}}\}$ and transitional probabilities given by
\begin{equation}
p(w_i|w_j)=Pr\{\hat W_2=w_i| W_2=w_j\},
\end{equation}
 where $i,j \in \{1,2,\dots,2^{LR'_2}\}$.
Thus, if we use an open loop encoder for coding over the latter effective DMC, we can achieve any rate $R_2\geq0$ to receiver 2 if 
\begin{equation}
\label{eq:rates-rx2}
R_2<\frac{1}{L}\max_{p_{W_2}}I(W_2;\hat W_2),
\end{equation}
where $p_{W_2}$ is the probability mass function of $W_2$, and $I(W_2;\hat W_2)$ is the average mutual information between $W_2$ and $\hat W_2$.

\begin{theorem}
\label{thm:single-feedback}
For any rate tuple $(R_1,R_2)$ such that $R_1<R_1^{pf}$ and $R_2<R_2^{pf}$, there exists $\epsilon>0$ such that $(R_1,R_2)$ is achievable by the  concatenated coding scheme over the AWGN-BC with a single noisy feedback link from receiver 1 with feedback noise variance $\sigma_{n_1}^2$ as large as $\epsilon$.
\end{theorem}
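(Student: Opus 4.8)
The plan is to follow the proof of Theorem~\ref{thm:main} on the receiver-1 side, whose superchannel is again a scalar AWGN channel, and to give a separate continuity argument on the receiver-2 side, where the relevant achievable rate is the mutual-information expression \eqref{eq:rates-rx2} over an effective DMC rather than a closed-form SNR. If $R_1=0$ or $R_2=0$, the corresponding rate is achieved trivially, so assume $R_1,R_2>0$ and fix a power split $\delta\in(0,1)$, with $P_1=\delta P$ and $P_2=(1-\delta)P$, for which $R_1<R_1^{pf}$ and $R_2<R_2^{pf}$.

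For receiver~1, since $R_1<R_1^{pf}=\lim_{L\to\infty}\frac{1}{2L}\log(1+SNR(L,0))$, I would pick a blocklength $L_1$ with $\frac{1}{2L_1}\log(1+SNR(L_1,0))>R_1$, where $SNR(\cdot,\cdot)$ is the effective receive SNR \eqref{eq:single-user-snr} of the Noisy-Bhaskaran inner code. For fixed $L_1$ the matrices $\mathbf g_1,\mathbf q_1,\mathbf F_1$ are those produced by the (re-designed) Bhaskaran construction for forward-noise variance $\sigma_{z_1}^2+\sigma_{n_1}^2$ and power $P_1$; hence the power constraint \eqref{eq:single-user-pwrconstraint} holds by construction (and the total power used by $x_1$, including the interference-subtraction term in the first transmission, is at most $L_1P_1$), and $SNR(L_1,\sigma_{n_1}^2)$ is continuous in $\sigma_{n_1}^2$ with $SNR(L_1,\sigma_{n_1}^2)\to SNR(L_1,0)$ as $\sigma_{n_1}^2\to0$. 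Thus there is $\epsilon_1>0$ with $\frac{1}{2L_1}\log(1+SNR(L_1,\sigma_{n_1}^2))>R_1$ for all $\sigma_{n_1}^2\le\epsilon_1$, and open-loop coding over the resulting scalar AWGN superchannel \eqref{eq:single-user-theta} achieves $R_1$. (Alternatively, one can freeze $\mathbf g_1,\mathbf q_1,\mathbf F_1$ at their $\sigma_{n_1}^2=0$ values and instead shave a little power off $P_1$, exactly as in the $\epsilon'_k,\epsilon''_k$ steps of the proof of Theorem~\ref{thm:main}.)

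For receiver~2, the delicate part, I would fix one inner-code blocklength $L_0$ and one concrete inner open-loop code (codebook plus decoder) to be used at every feedback-noise level. Since $R_2<R_2'<R_2^{pf}$ for some $R_2'$ and $R_2^{pf}$ is an achievable rate of the additive colored Gaussian noise channel with noise $\{x_1[\ell]+z_2[\ell]\}_{\ell>1}$ under power $P_2$ (via water-filling, as in \cite{bhaskaran}), the channel coding theorem together with Fano's inequality gives, for all large $L$, a length-$L$ codebook of size $\lceil 2^{LR_2'}\rceil$ (each codeword of power at most $LP_2$ and first entry zero) and a decoder such that, over the $\sigma_{n_1}^2=0$ channel, $\frac{1}{L}\max_{p_{W_2}}I(W_2;\hat W_2)>R_2$; fix such an $L_0\ge L_1$ and the corresponding code. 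Under the Noisy-Bhaskaran modification, the process $\{x_1[\ell]\}_{\ell>1}$ — hence the effective noise $\{x_1[\ell]+z_2[\ell]\}_{\ell>1}$ seen by receiver~2 — is Gaussian with covariance equal to the Bhaskaran covariance with $\sigma_{z_1}^2$ replaced by $\sigma_{z_1}^2+\sigma_{n_1}^2$, which depends continuously on $\sigma_{n_1}^2$. With the codebook and its decision regions held fixed, the transition probabilities $p(w_i|w_j)$ of the effective DMC in \eqref{eq:rates-rx2} are integrals of fixed indicator functions against a Gaussian density of continuously varying covariance, hence continuous in $\sigma_{n_1}^2$, and so is $\frac{1}{L_0}\max_{p_{W_2}}I(W_2;\hat W_2)$. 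Therefore there is $\epsilon_2>0$ with $\frac{1}{L_0}\max_{p_{W_2}}I(W_2;\hat W_2)>R_2$ for all $\sigma_{n_1}^2\le\epsilon_2$. The $x_2$-codebook is unchanged, so $x_2$ still uses power at most $L_0P_2$ while $x_1$ uses at most $L_0P_1$, whence the overall power constraint $P$ holds. Setting $\epsilon=\min(\epsilon_1,\epsilon_2)$, the concatenated scheme with inner blocklength $L_0$ achieves $(R_1,R_2)$ for every $\sigma_{n_1}^2\le\epsilon$.

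The main obstacle is this receiver-2 step: the achievable rate there is $\frac{1}{L_0}\max_{p_{W_2}}I(W_2;\hat W_2)$ over an effective DMC whose transition law depends on $\sigma_{n_1}^2$ simultaneously through the water-filling codebook, the colored-noise statistics, and the decoder, so one cannot simply quote continuity of an SNR formula as on the receiver-1 side. The resolution — pinning down one good finite-blocklength inner code and decoder at $\sigma_{n_1}^2=0$, where the channel coding theorem and Fano's inequality apply, and then exploiting continuity of the induced DMC, and hence of its mutual information, in the Gaussian noise covariance — is the step requiring care; everything else parallels the proof of Theorem~\ref{thm:main} and the single-user development of \cite{ZaDa11}.
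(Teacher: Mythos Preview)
Your receiver-2 argument takes a genuinely different route from the paper. The paper first picks a margin $\epsilon_1>0$ so that $R_2<R_2^{pf}(\cdot,\sigma_{z_1}^2+\epsilon_1)$, designs the entire receiver-2 code sequence for the \emph{worst-case} effective forward noise $\sigma_{z_1}^2+\epsilon_1$, and then argues (by a monotonicity-in-noise step) that for any $\sigma_{n_1}^2=\epsilon\le\epsilon_1$ a code with the same mutual-information guarantee exists. You instead freeze one good finite-$L$ codebook and decoder at $\sigma_{n_1}^2=0$ and invoke continuity of the induced DMC law, hence of $\max_{p_{W_2}}I(W_2;\hat W_2)$, in the Gaussian noise covariance. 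Your approach is legitimate and arguably cleaner, since it sidesteps having to justify that the colored-Gaussian channel is monotone in the covariance parameter.

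There is, however, a real ordering gap on the receiver-1 side. You choose $L_1$ and prove $\frac{1}{2L_1}\log(1+SNR(L_1,\sigma_{n_1}^2))>R_1$ \emph{before} the receiver-2 codebook is fixed, but the S-K part of $x_1$ only has power $(1-\delta_{IS}(L))P_1$, and $\delta_{IS}(L)$ is determined by the $x_2$ codebook via the interference-subtraction term; so the quantity $SNR(L_1,0)$ you are bounding is not yet well defined at that point. Moreover, you then run the inner code at blocklength $L_0\ge L_1$ but never re-verify the receiver-1 inequality at $L_0$; nothing you have written guarantees $\frac{1}{2L_0}\log(1+SNR(L_0,\sigma_{n_1}^2))>R_1$. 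The paper avoids both issues by first fixing the receiver-2 code sequence (which pins down $\delta_{IS}(\cdot)$), then choosing a single $L_0$ that simultaneously satisfies the three conditions $\delta_{IS}(L_0)\le\delta_{IS}(L_1)$, the receiver-1 SNR bound, and the receiver-2 mutual-information bound, and only then performing the $\epsilon$-perturbation. Your argument can be repaired the same way: fix the $x_2$ code sequence at $\sigma_{n_1}^2=0$ first, then choose one $L_0$ large enough for both users, and carry out both continuity arguments at that common $L_0$.
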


\begin{IEEEproof}
Choose $\epsilon_1>0$ such that 
\begin{equation}
R_2 < R_2^{pf}\left(\frac{P_2}{\sigma_{z_2}^2},\sigma_{z_1}^2+\epsilon_1 \right)\leq R_2^{pf}\left(\frac{P_2}{\sigma_{z_2}^2},\sigma_{z_1}^2\right),
\end{equation}
where the dependence of $R_2^{pf}$ on $\nicefrac{P_1}{\sigma_{z_2}^2}$ and $\sigma_{z_1}^2$ was made explicit.

For the Bhaskaran scheme designed for a channel similar to the given channel but with forward noise variance to receiver 1 of $\sigma_{z_1}^2+\epsilon_1$ instead of $\sigma_{z_1}^2$, we fix a sequence of codes for receiver 2 that achieves $R'_2$, where $R'_2$ is such that
\begin{equation}
R_2<R'_2<R_2^{pf}\left(\frac{P_2}{\sigma_{z_2}^2},\sigma_{z_1}^2+\epsilon_1\right).
\end{equation}
 Let the capacity of the effective DMC for each code of this sequence and that has blocklength $L$ be $\max_{p_{W_2}}\tilde{I}(W_2(L);\hat W_2(L))$, then
\begin{equation}
\label{eq:limit-MI}
\lim_{L\rightarrow \infty} \frac{1}{L}\max_{p_{W_2}}\tilde{I}(W_2(L);\hat W_2(L)) > R_2,
\end{equation}
where the dependence of $W_2$ and $\hat W_2$ on $L$ was made explicit. For convenience, we assume the limit in \eqref{eq:limit-MI} exists. If the limit does not exist, limit superior can be used instead and the proof will require very small changes to accomodate that. 

Choose $L_1$ such that 
\begin{equation}
R_1<\frac{1}{2}\log \left(1+\frac{(1-\delta_{IS}(L_1))P_1}{\sigma_{z_1}^2}\right),
\end{equation}
where $\delta_{IS}$ corresponds to substracting interference from the sequence of codes we have just fixed.
Using the S-K scheme but for $(1-\delta_{IS}(L_1))P_1$ power available to receiver 1 instead of $P_1$, we have
\begin{equation}
\label{eq:snr-proof}
\lim_{L\rightarrow \infty} \frac{1}{2L}\log\left(1+ \overline{SNR}(L,0)\right) = \frac{1}{2}\log\left(1+\frac{(1-\delta_{IS}(L_1))P_1}{\sigma_{z_1}^2}\right),
\end{equation}
where $\overline{SNR}$ here is given by~\eqref{eq:single-user-snr} and its $\mathbf g_1$, $\mathbf F_1$, and $\mathbf q_1$ matrices are constructed according to the S-K scheme that is designed for power constraint of $(1-\delta_{IS}(L_1))P_1$ .

Now, choose $L_0$ such that 
\begin{enumerate}
\item $\delta_{IS}(L_0) \leq \delta_{IS}(L_1)$
\item $R_1 < \frac{1}{2L_0}\log\left(1+ \overline{SNR}(L_0,0)\right)$
\item $R_2 < \frac{1}{L_0}\max_{p_{W_2}}\tilde{I}(W_2(L_0);\hat W_2(L_0))$.
\end{enumerate}
To find such $L_0$, we find an $L$ that satisfies each of three the conditions separately and then choose the largest among them. Specifically,
\begin{enumerate}
\item $\delta_{IS}(L)$ is monotonically descreasing in $L$ and so any $L_0\geq L_1$ suffice. Let our choice be $L_0^{(1)}$.
\item By \eqref{eq:snr-proof} and by the definition of the limit, there exits $L_0^{(2)}$ such that for any $L\geq L_0^{(2)}$ we have $R_1 < \frac{1}{2L}\log\left(1+ \overline{SNR}(L,0)\right)$.
\item By \eqref{eq:limit-MI} and by the definition of the limit, there exits $L_0^{(3)}$ such that for any $L\geq L_0^{(3)}$ we have $R_2 < \frac{1}{L}\max_{p_{W_2}}\tilde{I}(W_2(L);\hat W_2(L))$.
\end{enumerate}
Then, $L_0=\max\{L_0^{(1)},L_0^{(2)},L_0^{(3)}\}$ would satisfy the three conditions together.

Let $\mathbf g_1^{(L_0)}$, $\mathbf F_1^{(L_0)}$, and $\mathbf q_1^{(L_0)}$ be the matrices of the S-K scheme we are using but for blocklength $L_0$. Note that $\mathbf g_1^{(L_0)}$ and $\mathbf F_1^{(L_0)}$ satisfy
\begin{equation}
{\mathbf g_1^{(L_0)}}^T \mathbf g_1^{(L_0)} E[\theta^2]+ \sigma_{z_1}^2 \|\mathbf F_1^{(L_0)}\|_F^2 \leq  L_0 (1-\delta_{IS}(L_1)) P_1.
\end{equation}
Choose $0<\epsilon_2\leq \epsilon_1$ and $\mathbf g_1'$ (also with the only non-zero entry in the first position) such that
\begin{equation}
\label{eq:powerconstraint_epsilon2}
\mathbf g_1'^T \mathbf g_1' E[\theta^2]+ (\sigma_{z_1}^2+\epsilon_2) \|\mathbf F_1^{(L_0)}\|_F^2 \leq  L_0 (1-\delta_{IS}(L_1)) P_1
\end{equation}
and
\begin{equation}
\label{eq:snr_epsilon2}
R_1 < \frac{1}{2L_0}\log\left(1+ SNR'(L_0,\epsilon_2)\right),
\end{equation}
where $SNR'$ is the same as $\overline{SNR}$ but with $\mathbf g_1^{(L_0)}$ replaced with $\mathbf g_1'$.
It can be shown that such  $\mathbf g_1'$ and $\epsilon_2$ exist by a similar argument as in the proof of Theorem 1. Let $\mathbf F_1'$ be of construction similar to $\mathbf F_1^{(L_0)}$ but with $\sigma_{z_2}^2$ replaced with $\sigma_{z_2}^2 + \epsilon$  in its construction, where $0<\epsilon\leq\epsilon_2$ is such that
 \begin{equation}
R_1 < \frac{1}{2L_0}\log\left(1+ SNR''(L_0,\epsilon_2)\right),
\end{equation}
and $SNR''$ is the same as $SNR'$ but with $\mathbf F_1^{(L_0)}$ replaced with $\mathbf F_1'$.
Since $\|\mathbf F_1^{(L_0)}\|_F^2 \geq \|\mathbf F_1'\|_F^2$ (by the construction of the S-K scheme) and $0<\epsilon\leq\epsilon_2$, we have
\begin{equation}
\mathbf g_1'^T \mathbf g_1' E[\theta^2]+ (\sigma_{z_1}^2 + \epsilon)\|\mathbf F_1'\|_F^2 \leq  L_0 (1-\delta_{IS}(L_1)) P_1
\end{equation}
and
\begin{equation}
R_1 < \frac{1}{2L_0}\log\left(1+ SNR''(L_0,\epsilon)\right).
\end{equation}

%Let $\mathbf F_1'$ be constructed as $\mathbf F_1^{(L_0)}$ but with $\sigma_{z_1}^2$ replaced with $\sigma_{z_1}^2+\epsilon$ in the construction, where $\epsilon$ satisfies $0<\epsilon\leq\epsilon_1$ and that will be chosen next. Choose $\epsilon$ and $\mathbf g_1'$ to be such that
%\begin{equation}
%\mathbf g_1^{'\top} \mathbf g_1' E[\theta^2]+ (\sigma_{z_1}^2+\epsilon) \|\mathbf F_1'\|_F^2 \leq  L_0 (1-\delta_{IS}(L_0)) P_1 
%\end{equation}
%and
%\begin{equation}
%R_1 < \frac{1}{2L_0}\log\left(1+ SNR'(L_0,\epsilon)\right)
%\end{equation}
%where $SNR'$ is the same as $\tilde{SNR}$ but with $\mathbf g_1^{(L_0)}$ and $\mathbf F_1^{(L_0)}$ replaced with $\mathbf g_1'$ and $\mathbf F_1'$, respectively. Note that the existence of such $\epsilon_2$ and $\mathbf g_1'$ is possible by the continuity of $\mathbf F_1^{(L_0)}$ at $\sigma_{z_1}^2$ and of the continuity of $SNR(L_0,0)$ at $\mathbf g_1$ and $\mathbf F_1^{(L_0)}$.

%Then we have
%\begin{equation}
%R_1 < \frac{1}{2L_0}\log\left(1+ SNR^{(\epsilon)}(L_0,\epsilon)\right)
%\end{equation}
%where  $SNR^{(\epsilon)}(L_0,\epsilon)$ is the same as $SNR'(L_0,\epsilon_2)$ but with  $\epsilon_2$ replaced by $\epsilon$ and $\mathbf F_1^{(L_0)}$ with $\mathbf F_1^{(\epsilon)}$. We also %have
%\begin{equation}
%\mathbf g_1^{'\top} \mathbf g_1' E[\theta^2]+ (\sigma_{z_1}^2+\epsilon) \|\mathbf F_1^{(\epsilon)}\|_F^2 \leq  L_0 (1-\delta_{IS}(L_0)) P_1.
%\end{equation}

For the Noisy-Bhaskaran scheme of blocklength $L_0$ and over the given channel but with $\sigma^2_{n_1}=\epsilon>0$, we have found
\begin{itemize}
\item For receiver 1: $\mathbf F_1'$, $\mathbf g_1'$, and $\mathbf q_1^{(L_0)}$ such that
\begin{equation}
 R_1 < \frac{1}{2L_0}\log\left(1+ SNR''(L_0,\epsilon)\right)
\end{equation}
and
\begin{align}
 \mathbf g_1'^T \mathbf g_1' E[\theta^2]+ (\sigma_{z_1}^2 + \epsilon)\|\mathbf F_1'\|_F^2 & \leq  L_0 (1-\delta_{IS}(L_1)) P_1 \\
&\leq L_0(1-\delta_{IS}(L_0)) P_1.
\end{align}
\item For receiver 2: a code of blocklength $L_0$ that satisfies
\begin{equation}
R_2 < \frac{1}{L_0}\max_{p_{W_2}}I(W_2(L_0);\hat W_2(L_0))
\end{equation}
for the case of forward noise variance to receiver 1 of $\sigma^2_{z_1}+\epsilon_1$ that reqiures no larger than $\delta_{IS}(L_0)L_0P$ power to be subtracted. Hence, there exists a code of length $L_0$ for the case of $\sigma^2_{z_1}+\epsilon \leq \sigma^2_{z_1}+\epsilon_1$ that requires no larger than $\delta_{IS}(L_0)L_0P$ power for interference subtraction and is such that 
\begin{equation}
R_2 < \frac{1}{L_0}\max_{p_{W_2}}I'(W_2(L_0);\hat W_2(L_0)),
\end{equation}
where $\max_{p_{W_2}}I'(W_2(L_0);\hat W_2(L_0))$ is the capacity of the effective DMC of the new code for the case of $\sigma^2_{z_1}+\epsilon$.
\end{itemize}

Therefore, by using concatenated coding as presented in Section \ref{sect:single-feedback-concatenated} over the Noisy-Bhaskaran scheme of blocklength $L_0$ just described, the rate tuple $(R_1,R_2)$ is achievable for $\sigma_{n_1}^2$ as large as $\epsilon$.
\end{IEEEproof}

In \cite{ramji}, the same channel was considered, and in particular the symmetric case. For high forward channel SNR, the scheme in \cite{ramji} showed improvements on the no-feedback sum-capacity for feedback noise level as large as forward noise level. However, for low, but still practical, forward channel SNR, the scheme in \cite{ramji} shows negligible improvement on the no-feedback sum-capacity even for the noiseless feedback case. The result of Theorem \ref{thm:single-feedback} is an improvement on that, albeit for small feedback noise level.
\section{Conclusion}
\label{sect:conclusion}
In this paper, we have used a concatenated coding design that uses linear feedback schemes as inner codes to achieve rate tuples for the $K$-user AWGN-BC with noisy feedback outside the no-feedback capacity region. We have shown an achievable rate region of linear feedback schemes for the noiseless feedback case to be achievable by the concatenated coding scheme for sufficiently small feedback noise level. We also presented a linear feedback scheme for the symmetric $K$-user AWGN-BC with noisy feedback that was used as an inner code in the concatenated coding scheme that was itself optimized to achieve sum-rates above the no-feedback sum-capacity. The concatenated coding design was also applied to the two-user AWGN-BC with a single noisy feedback link from one of the receivers.

\bibliographystyle{IEEEtran}
\bibliography{all}

\end{document}